\newcommand{\Nil}{TMPNIL}
\newcommand{\draftonly}[1]{}
\newcommand{\anegate}{\cnegate}
\newcommand{\tnegate}{\mathop{\neg}}
\newcommand{\atomid}{{\cal E}}
\newcommand{\AtomicSteps}{\mathcal{A}}
\mathchardef\mhyphen="2D
\newcommand{\pguard}[1]{(\mathop{\Keyword{\pi \mhyphen restrict}} #1)}
\newcommand{\eassume}[1]{(\mathop{\Keyword{\epsilon \mhyphen assm}} #1)}
\newcommand{\quint}[5]{\{#1,#2\}#5\{#3,#4\}}
\newcommand{\quintprgqc}{\quint{p}{r}{g}{q}{c}}
\renewenvironment{proof}{\par\noindent\textbf{Proof.}}{\hfill$\Box$}
\newcommand{\Pevent}[1]{\pi(#1)}
\newcommand{\Eevent}[1]{\epsilon(#1)}
\renewcommand{\interleave}{\mathop{|\hspace*{-0.2ex}|\hspace{-0.2ex}|}}
\newcommand{\refeqn}[1]{(\ref{eqn-#1})}
\newcommand{\reflemma}[1]{Lemma~\ref{L-#1}}
\newcommand{\synchpstep}[1]{\Pevent{#1}}
\newcommand{\synchestep}[1]{\Eevent{#1}}
\newcommand{\synchpstepA}{\synchpstep{A}}
\newcommand{\synchestepA}{\synchestep{A}}
\newcommand{\synchpstepa}{\synchpstep{a}}
\newcommand{\synchestepa}{\synchestep{a}}
\newcommand{\synchestepE}{\atomid}
\newcommand{\spstep}[1]{\synchpstep{#1}}
\newcommand{\sestep}[1]{\synchestep{#1}}
\newcommand{\spstepA}{\synchpstepA}
\newcommand{\sestepA}{\synchestepA}
\newcommand{\spstepa}{\synchpstepa}
\newcommand{\spstepac}{\synchpstep{\ecompl{a}}}
\newcommand{\sestepa}{\synchestepa}
\newcommand{\sestepE}{\atomid}
\newcommand{\ssilent}{\spstep{\silent}}
\newcommand{\pl}{\parallel}
\newcommand{\plE}[1]{\underset{#1}{\pl}}
\newcommand{\plA}{\plE{A}}
\newcommand{\Event}{Event}
\newcommand{\silent}{\iota}
\newcommand{\ecompl}[1]{\overline{#1}}
\newcommand{\ecompla}{\ecompl{a}}
\newcommand{\atev}[1]{\atomic{#1}}
\newcommand{\ateva}{\atev{a}}
\newcommand{\atevac}{\atev{\ecompla}}
\newcommand{\atevs}{\atev{\silent}}
\newcommand{\Hide}[2]{{#2}/_{#1}}
\newcommand{\Res}[2]{{#2\backslash#1}}
\newcommand{\AProc}[2]{#1{:}#2}
\newcommand{\AProcA}[1]{\AProc{A}{#1}}
\newcommand{\AProcAp}{\AProcA{p}}
\newcommand{\mult}{\cross}
\newcommand{\One}{\mathbf{1}}
\newcommand{\Inverse}[1]{{#1}^{-1}}
\definecolor{Red}{rgb}{1.,0.,0.}
\definecolor{Blue}{rgb}{0.,0.,1.}
\definecolor{Pink}{rgb}{1.,0.75,0.8}
\definecolor{Green}{rgb}{0.2.,0.5,0.2}
\begin{document}
\title{An algebra of synchronous atomic steps
\thanks{This work is supported by Australian Research Council (ARC) Discovery Project DP130102901
and the UK EPSRC ‘Taming Concurrency’ research grant.}}
\author{Ian J. Hayes\inst{1} \and 
Robert J. Colvin\inst{1} \and 
Larissa A. Meinicke\inst{1} \and 
Kirsten Winter\inst{1} \and \\
Andrius Velykis\inst{2}%
}
\authorrunning{I. J.~Hayes et al.  \draftonly{(\today)}}
\institute{School of 
Information Technology and Electrical Engineering, \\ 
The University of Queensland, Australia
  \and School of Computing Science, Newcastle University, UK 
  \draftonly{\vspace*{2ex}}}

\maketitle

\begin{abstract}
\sloppy
This research started with 
an algebra for reasoning about rely/guarantee concurrency for a shared
memory model. The approach taken led to a more \emph{abstract algebra of
atomic steps}, in which atomic steps synchronise (rather than
interleave) when composed in parallel. The algebra of rely/guarantee
concurrency then becomes an interpretation of the more abstract
algebra. Many of the core properties needed for rely/guarantee
reasoning can be shown to hold in the abstract algebra where their
proofs are simpler and hence allow a higher degree of
automation. Moreover, the realisation that the synchronisation
mechanisms of standard process algebras, such as CSP and CCS/SCCS, can
be interpreted in our abstract algebra gives evidence of its unifying
power. The algebra has been encoded in Isabelle/HOL to provide a basis
for tool support.

\end{abstract}

\section{Introduction}

Our goal is to provide better methods for deriving concurrent programs from abstract 
specifications, and to provide tool support for compositional reasoning about their correctness.  
The rely/guarantee approach of Jones \cite{jones81d,jones83a} achieves compositionality by abstracting the 
interference a process can tolerate from and inflict on its environment.  
A \emph{rely} condition $r$ is a binary relation between states 
that represents an assumption bounding the interference that a process $p$ can tolerate from its environment. 
If the environment fails to meet its obligation $r$,
$p$ may deviate from its specification and show erratic behaviour (i.e.\ abort). 
A \emph{guarantee} condition $g$ is the corresponding notion 
that bounds the interference inflicted on its environment by $p$. 
For a system of parallel processes to function correctly, 
each process's guarantee must imply the rely of every other parallel process. 
These concepts can be captured uniformly 
(and hence the manipulation of process terms kept simple)
in a framework in which both the steps of a process and the
steps of its environment are explicitly represented.

The semantic model for rely/guarantee reasoning suggested by Aczel is one such framework \cite{Aczel83,DeRoever01}.
In this model, parallel composition \emph{synchronises} a program step of one process
with an environment step of another, to give a program step of their composition.
Aczel's approach, of insisting each step of one process is
synchronised with a step of the other process, differs from the
commonly used approach of interleaving atomic steps of processes
(except when they communicate), e.g.\ CCS \cite{CaC}, CSP
\cite{Hoare85} and ACP \cite{BergstraKlop84,BergstraKlop85}.  
Aczel's approach is closer to Milner's Synchronous CCS (SCCS) \cite[Section 9.3]{CaC}
and Meije (the calculus at the basis of the synchronous programming language Esterelle)
\cite{BerryCosserat85}.

Our methodology is to develop a refinement calculus for concurrent programs
that lifts rely and guarantee conditions to 
commands\footnote{We use the terms \emph{command}, \emph{program} and \emph{process} synonymously.}
\cite{FACJexSEFM-14,HayesJonesColvin14TR}
(from parameters to the notion of correctness).
That allows algebraic reasoning about concurrent programs
in a rely/guarantee style.
To this end
we have designed a \emph{Concurrent Refinement Algebra} (CRA)
to support the rely/guarantee approach \cite{AFfGRGRACP}.
In exploring the laws in CRA, 
we discovered that atomic steps have 
specific
 algebraic properties 
that can be captured in an \emph{abstract algebra of atomic steps} 
which is embedded in CRA.

The abstract algebra of atomic steps delivers  a range of useful properties for manipulating process terms. 
For example, based on the notion of atomic steps the parallel composition of processes 
can be simplified as follows
\begin{eqnarray}
  (a \Seq c) \parallel (b \Seq d) & = & (a \parallel b)  \Seq  (c \parallel d)~, \label{atomic-interchange}
\end{eqnarray}
where $a$ and $b$ are atomic steps and $c$ and $d$ are arbitrary
processes.  Note that the above equivalence does not hold if $a$ and
$b$ are arbitrary processes.  
For an interleaving operator $\interleave$ the corresponding law is the more complicated:
\begin{eqnarray}
  (a \Seq c) \interleave (b \Seq d) & = & a  \Seq (c \interleave b \Seq d) \nondet b \Seq (a \Seq c \interleave d)~.
\end{eqnarray}
In (\ref{atomic-interchange}), parallel composition of two atomic steps $a$ and $b$ gives an atomic step $a \parallel b$,
where the interpretation of $a \parallel b$ depends on the particular model.
As a consequence, the algebra can be applied to a range of models. For example, as well
as allowing an Aczel-trace  model to support shared variable
concurrency, communication in process algebras such as CSP and CCS/SCCS can be
interpreted in the abstract algebra and hence it provides a foundation for a
range of concurrency models.

\emph{Kleene Algebra with Tests} (KAT) 
by Kozen~\cite{kozen97kleene}
combines Kleene algebra (the algebra of regular expressions \cite{Conway71})
with a Boolean sub-algebra representing tests.
KAT supports sequential programs with conditionals and finite iterations (partial correctness).
The \emph{Demonic Refinement Algebra} (DRA) of von Wright \cite{Wright04}
generalises Kozen's work to support possibly infinite iteration
and with that the concept of aborting behaviour.
The approach used in this paper is based on that of von Wright 
in order to faithfully capture Jones' theory, in particular his rely condition.

\emph{Concurrent Kleene Algebra} (CKA) \cite{DBLP:journals/jlp/HoareMSW11} 
adds a parallel operator to Kleene algebra to support sequential and parallel programs.
Prisacariu's \emph{Synchronous Kleene Algebra} (SKA) \cite{Pris10} extends Kleene algebra
with a synchronous parallel operator similar to that in Milner's SCCS \cite{CaC}.
Like Milner he proposes a specific interpretation of the parallel composition of atomic steps.
In contrast to both CKA and SKA, our \emph{Concurrent Refinement Algebra} \cite{AFfGRGRACP}, 
which we use as a basis for this work,
adds a parallel operator to the sequential algebra DRA 
(rather than Kleene algebra).

The major contribution of this paper is an \emph{algebra of atomic steps}
which introduces a synchronous parallel operator for atomic steps.
The interpretation of two atomic steps acting in parallel, however, is left open, 
hence allowing a range of different models (including those of Milner and Prisacariu).
Further, atomic steps are treated as a Boolean sub-algebra
(similar to the way in which tests are treated as a Boolean sub-algebra in KAT).
Hence the Concurrent Refinement Algebra (CRA)  contains both
a sub-algebra of tests and a sub-algebra of atomic steps 
(as illustrated in Figure~\ref{lattices} via their lattices).
Separating out these sub-algebras enables one to prove properties that are specific
to atomic steps using the full power of a Boolean algebra. 
This raises the level of support for reasoning about programs provided by our algebra, 
as well as the level of automation that is possible for the mechanised proof support 
by the theorem prover Isabelle.

\begin{figure}[t]
 \centering
   \includegraphics[viewport=31 606 332 828,width=0.48\linewidth]{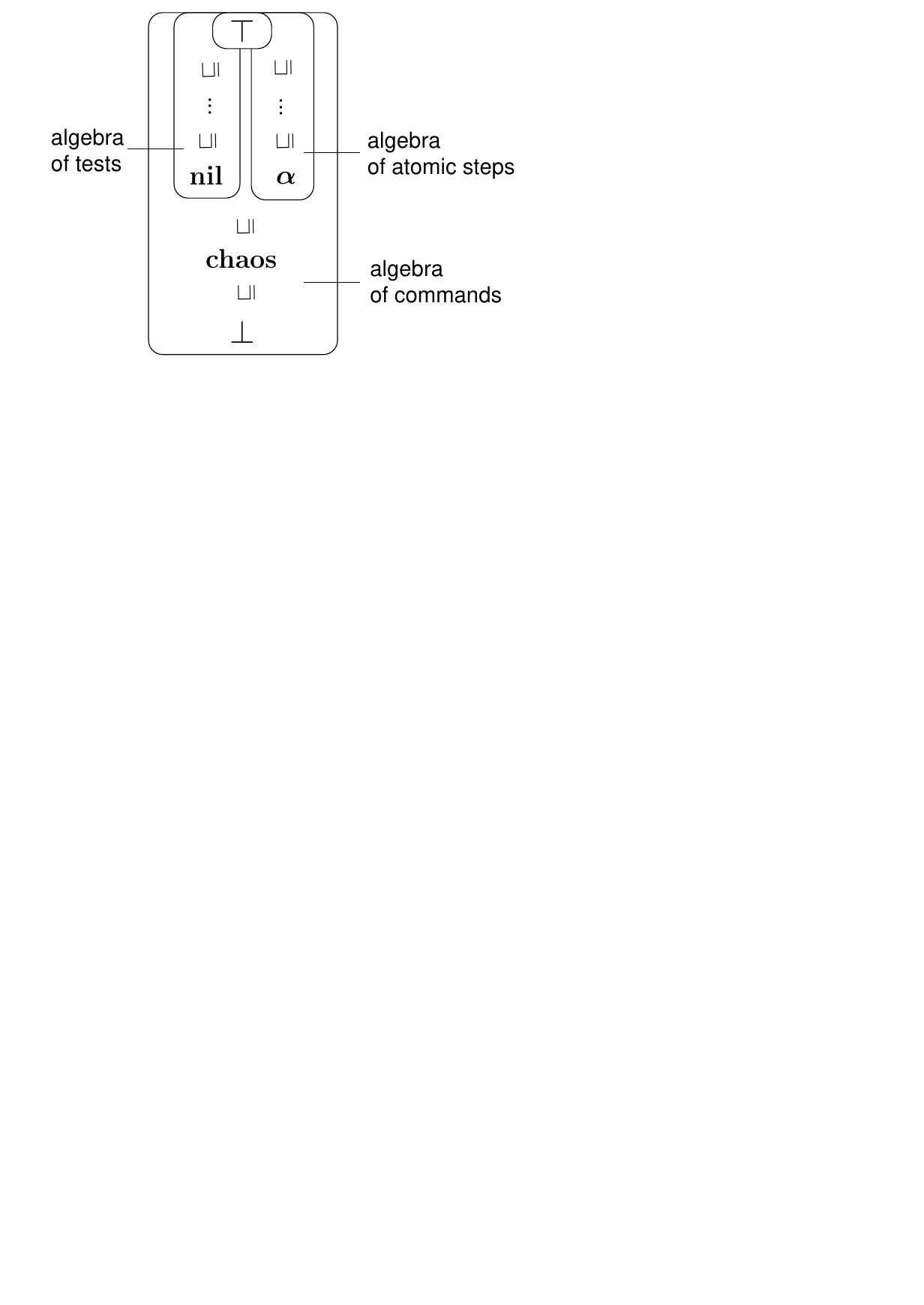}
  \caption{The Concurrent Refinement Algebra and its sub-algebras}
  \label{lattices}
\end{figure}
To build the algebra, 
we start in Section~\ref{S-general-algebra} with CRA for reasoning about commands in general.
Commands include a sub-lattice of tests (Section~\ref{S-tests})
as well as 
a second sub-lattice of atomic steps (Section~\ref{S-abstract-atomic}), 
the novel contribution of this paper.  
Section~\ref{S-relational-atomic} gives an interpretation of the abstract algebra
based on Aczel's trace model. 
A simplified treatment of relies and guarantees 
is outlined in Section~\ref{S-rely-guarantee}.
Section~\ref{S-process-algebras} illustrates how the communication models of CCS, CSP 
and SCCS
can be interpreted in our abstract algebra of atomic steps.

\section{Concurrent Refinement Algebra}\label{S-general-algebra}

A Concurrent Refinement Algebra  (CRA) is defined as the following structure
\begin{eqnarray*}
(\cal{C}, \nondet, \sqcup, ~\Seq~, \parallel, \bot, \top, \Nil, \Skip)
\end{eqnarray*}
where the carrier set $\cal{C}$ is the set of \emph{commands}. 
Sequential composition ($\Seq$) has higher precedence than parallel ($\parallel$), 
which has higher precedence than $\nondet$ and $\sqcup$, which have equal precedence.

Commands form a complete distributive lattice 
$(\cal{C}, \nondet, \sqcup, \bot, \top)$ 
with \emph{nondeterministic choice} as the lattice meet ($c\nondet d$),
and \emph{conjunction} of commands as the lattice join ($c \sqcup d$).  
The top of the lattice $\top$ is the infeasible command 
(called ``magic'' in the refinement calculus) 
and the bottom of the lattice $\bot$ is the command that aborts.  
The partial order defined on commands is the refinement relation $c \refsto d$
meaning $c$ is refined (or implemented by) $d$. 
For any commands $c,d \in \cal{C}$, 
$c \refsto d \sdefs (c \nondet d) = c$,
and hence $\bot \refsto c \refsto \top$. 
We refer to this as the \emph{refinement lattice} (see Figure~\ref{lattices}).
Note that since CRA is a \emph{refinement} algebra it uses $\refsto$ as partial order 
instead of Kozen's $\geq$ and hence our lattice of commands is the dual of Kozen's lattice 
(i.e., $\sqcap$ in CRA matches $\sqcup$ in KAT, and $\sqcup$ in CRA matches $\sqcap$ in KAT).
Given commands form a complete lattice, for any monotone function
least/greatest fixed points are well defined.  In particular, fixed
points are used to define iteration operators below.

\newcommand{\Angelic}{\mathop{\textstyle\bigsqcup}}

Sequential composition of commands ($c \Seq d$) is associative and has
identity $\Nil$.  
As an abbreviation, the sequential composition operator may be elided.
Sequential composition has both $\top$ and $\bot$ as left (but not right) annihilators%
\footnote{Here our approach based on DRA differs from approaches based on Kleene algebra, like CKA and SKA, 
in which $\top$ is also a right annihilator.}, 
i.e.\ $\top \SSeq c = \top$ and $\bot \SSeq c = \bot$.
It 
distributes over arbitrary choices on the right (\ref{L-seq-distr-right}), 
\begin{eqnarray}
  (\textstyle\Nondet C) \SSeq d & = & \textstyle\Nondet _{c \in C} (c \SSeq d)~. \label{L-seq-distr-right}
\end{eqnarray}
The iteration of a command is inductively defined as 
$c^0 = \Nil$ and $c^{i+1} = c \SSeq c^i$.
More general iteration operators are captured via greatest ($\nu$) and least
($\mu$) fixed points: $\Fin{c} \sdefs \nu x .\Nil \nondet c \SSeq x$\; for
finite iteration zero or more times, and $\Om{c} \sdefs \mu x . \Nil \nondet c \SSeq x$~ for
finite or 
possibly 
infinite iteration. 
Infinite iteration is defined as $\Inf{c} = \Om{c} \top$. 
The unfolding laws (\ref{L-omega-unfold}) and (\ref{L-finite-unfold})
result from
the fixed point definitions for iterations, 
and
(\ref{L-infinite-unfold}) follows from (\ref{L-omega-unfold}) and the definition of $\Inf{c}$
which also justifies (\ref{L-infinite-annihilates}).
Law~(\ref{L-infinite-unfold-power}) follows from (\ref{L-infinite-unfold}) by induction.
\begin{minipage}{0.49\textwidth}
\begin{eqnarray}
  \Om{c} &=& \Nil \nondet c \SSeq \Om{c} \label{L-omega-unfold} \\
  \Fin{c} &=& \Nil \nondet c \SSeq \Fin{c} \label{L-finite-unfold} \\
  \Inf{c} &=& c \SSeq \Inf{c} \label{L-infinite-unfold}
\end{eqnarray}
\end{minipage}
\begin{minipage}{0.49\textwidth}
\begin{eqnarray}
  \Inf{c} &=& c^i \SSeq \Inf{c} \label{L-infinite-unfold-power} \\
  \Inf{c}\SSeq d &=& \Inf{c} \label{L-infinite-annihilates} 
\end{eqnarray}
\end{minipage}
\\

Some models also distribute sequential composition 
over non-empty choices on the left (\ref{L-seq-distr-left}) 
(i.e., in refinement calculus terms the operator is conjunctive).
\begin{eqnarray}
D \neq \{\} \implies c \SSeq (\textstyle\Nondet D) & = & \textstyle\Nondet_{d \in D} (c \SSeq d) \label{L-seq-distr-left}
\end{eqnarray}

This axiom is not assumed to generally hold in CCS and CSP but
it holds for our relational model in Section~\ref{S-relational-atomic}
and is required to show laws~(\ref{L-isolation}) and (\ref{L-finite-iteration}).
Laws~(\ref{L-iteration1}) and (\ref{L-iteration2}) 
follow from (\ref{L-isolation}), (\ref{L-infinite-annihilates}) and (\ref{L-infinite-unfold}).
\\
\begin{minipage}{0.5\textwidth}
\begin{eqnarray}
  \Om{c} &=& \Fin{c} \nondet \Inf{c} \label{L-isolation} \\
  \Fin{c} &=& \textstyle\Nondet_{i \in \nat} c^i\label{L-finite-iteration}
\end{eqnarray}
\end{minipage}
\begin{minipage}{0.5\textwidth}
\begin{eqnarray}
  \Om{c} \SSeq d &=& \Fin{c} \SSeq d \nondet \Inf{c} \label{L-iteration1} \\
  c \SSeq \Om{c} \SSeq d &=& c\SSeq \Fin{c} \SSeq d \nondet \Inf{c} \label{L-iteration2}
\end{eqnarray}
\end{minipage}
\\

Parallel composition of commands is associative,
commutative, has the identity $\Skip$, and $\top$ serves as an annihilator: 
$c \parallel \top = \top$.
Parallel distributes over non-deterministic choice of any set of commands,
$c \parallel (\Nondet D) =\Nondet_{d \in D} (c \parallel d)$.
Note the identities for sequential and parallel composition,
$\Nil$ and $\Skip$ respectively, differ.
However, they are related by $\Skip \refsto \Nil$
and $\Nil \parallel \Nil = \Nil$.

\section{The Boolean sub-algebra of tests}
\label{S-tests}

Tests are special commands that are used to model conditionals and loops and hence form an
essential construct when reasoning about programs.
Assume $t$ is a test, $\tnegate{t}$ is its negation, and $c$ and $d$ are
commands, 
an abstract algebraic representation of conditionals and
while loops for sequential programs is given by
\[
  \text{\bf if }  t ~\text{\bf then } c ~\text{\bf else } d \sdefs t \SSeq c \nondet \tnegate t \SSeq d
  ~~~\text{ and }~~~
  \text{\bf while } t ~\text{\bf do } c \sdefs \Om{(t \SSeq c)} \SSeq \tnegate t
\]
Blikle \cite{Blikle78} used this style of representation of programs in a relational algebra 
and \cite{gardinerfacj93} and \cite{Wright04} in the refinement calculus.
Kozen \cite{kozen97kleene} provided a more abstract 
\emph{Kleene Algebra with Tests} (KAT) as a framework for reasoning about
programs.  Kleene algebra is the algebra of regular expressions, 
where for the interpretation as programs,
alternation becomes non-deterministic choice with unit $\top$,
concatenation becomes sequential composition with unit $\Nil$, and
iteration becomes finite iteration of commands.
Tests, in Kozen's approach, form a Boolean sub-algebra within the Kleene algebra.  

We follow this construction for the Concurrent Refinement Algebra.
That means in CRA tests form a subset of commands for which a
negation operator $\tnegate$ is defined. This results in an extended algebra
\begin{eqnarray*}
(\cal{C}, \cal{B}, \nondet, \sqcup, ~\Seq~, \parallel, \bot, \top, \Nil, \Skip, ~\tnegate)
\end{eqnarray*}
where the additional carrier set $\cal{B}$ is the set of \emph{test commands} ($\cal{B} \subseteq \cal{C}$). 
As in Kozen's work, tests form a Boolean algebra
$
(\cal{B}, \nondet, \sqcup, ~\tnegate~, \top, \Nil)
$
which is a \emph{sub-lattice} of commands (see Figure~\ref{lattices}).

The sub-lattice of tests shares its top element (the false test) with the top of
the lattice of commands, $\top$, but does not share its bottom element,
the true test, that instead corresponds to the command $\Nil$, 
that has no effect and immediately terminates.
Tests are closed under lattice meet and join,
as well as sequential and parallel composition as both are defined via the join operator $\sqcup$ on commands. 
For any $t$ and $t'$ in $\cal{B}$,
\vspace*{-2ex}\\
\begin{minipage}{0.5\textwidth}
\begin{eqnarray}
  t \SSeq t' &=& t \sqcup t'      \label{test-seq-test}
\end{eqnarray}
\end{minipage}
\begin{minipage}{0.49\textwidth}
\begin{eqnarray}
 t \parallel t' &=&  t \sqcup t' \label{test-par-test}
\end{eqnarray}
\end{minipage}
\vspace*{0.5ex}\\
where the join of two test acts as logical conjunction. 
Property (\ref{test-par-test}) can be generalised to 
the following interchange axiom.
For any commands $c$ and $d$ in $\cal{C}$ and any tests $t$ and $t'$ in $\cal{B}$
the following hold.
\vspace*{-2ex}\\
\begin{minipage}{0.5\textwidth}
\begin{eqnarray}
(t \SSeq c) \parallel (t' \SSeq d)  = (t \sqcup t') \SSeq (c \parallel d)
\end{eqnarray}
\end{minipage}
\begin{minipage}{0.49\textwidth}
\begin{eqnarray}
(t \SSeq c) \sqcup (t' \SSeq d)  = (t \sqcup t') \SSeq (c \sqcup d)
\end{eqnarray}
\end{minipage}
\vspace*{1ex}

A range of useful laws follow from this axiomatisation that help
simplifying program terms involving tests. 

Tests also give rise to the concept of \emph{assertions}
(preconditions) \cite{Wright04,Solin07}. 
The assertion corresponding to a test $t$ is a command which
terminates if the test holds and aborts if the test does not hold,
i.e., $\Assert{t} = t \nondet \tnegate{t} \SSeq \bot$.

\section{Abstract atomic steps}\label{S-abstract-atomic}

This section gives an abstract algebra for the subset of commands 
that correspond to atomic steps.
This algebra delivers core properties of atomic steps 
(that do not hold for commands in general)
under only a few assumptions about the form of atomic steps.
Atomic steps are closed under parallel composition but
the parallel composition of atomic steps, $a \parallel b$, is left uninterpreted.
Lifting these properties to the level of an abstract algebra results in simpler proofs and allows for their reuse in different interpretations.
Section~\ref{S-relational-atomic} forms an interpretation of the atomic step algebra 
that corresponds to Aczel's program and environment steps and defines parallel composition of atomic steps 
in detail.
Section~\ref{S-process-algebras} on the other hand, uses the atomic step algebra to capture CCS-style as well as CSP-style communication of events, which resides in a very different domain.

In the same manner that tests form a sub-lattice of commands, 
the set of atomic steps, $\AtomicSteps$, forms a sub-lattice of commands which is a Boolean algebra
and shares the lattice meet and join of commands (see Figure~\ref{lattices}).
The top of the sub-lattice is the same as the top of the command lattice ($\top$)
but the bottom of the sub-lattice is the new command $\cstepd$,
that can be thought of as the non-deterministic choice between all possible atomic steps.
In fact, tests and atomic steps  share only one element ($\top$) and hence
\begin{eqnarray}
 \cstepd \sqcup \Nil = \top \label{A-test-atomic-sup}~.
\end{eqnarray}

The term \emph{step} is used exclusively for an atomic step.
Steps are closed under lattice meet and join as well as parallel composition
(but not sequential composition).
As for commands, the meet corresponds to non-deterministic choice, $a \nondet b$,
and can behave as either $a$ or $b$.
The join of two steps, $a \sqcup b$, can be thought of as a step 
that both $a$ and $b$ agree to do.
(In Section~\ref{S-relational-atomic} this
corresponds to the intersection of the sets of primitive steps $a$ and $b$ can make.)

Because $\AtomicSteps$ forms a Boolean algebra,
all of the laws of Boolean algebra are available to manipulate combinations of steps
not involving sequential composition.
The theorem prover Isabelle directly supports forming such an interpretation
and hence the theory of Boolean algebra can be re-used for $\AtomicSteps$.
This is a significant saving, as the laws of Boolean algebra do not need to be reproven.

In addition, atomic steps are assumed to have an \emph{identity}, $\atomid$, of parallel composition,
giving the following axiom.
\vspace*{-1ex}
\begin{eqnarray}
  a \parallel \atomid & = & a  \label{A-atomic-parallel-identity}
\end{eqnarray}
Prefixing a command $c$ with $\atomid$, i.e.\ $\atomid \SSeq c$, allows the process to
wait one step before behaving as $c$,
and $\Om{\atomid} \SSeq c$ allows it to wait any number of steps (including 0).
The step $\atomid$ can be interpreted as a placeholder for one step taken by its environment.

Besides laws for reasoning about atomic steps in isolation,
one needs laws that allow reasoning about their interaction with non-atomic commands.
A small set of additional axioms is used as the basis of these laws.
The approach taken to handling parallel composition is not the usual interleaving of steps, 
rather each step of one process must synchronise with a step of the other process.
If $a$ and $b$ cannot synchronise then $a \parallel b$ is infeasible ($\top$).
For steps $a$ and $b$, and any commands $c$ and $d$,
we assume the following axioms.
\vspace*{-4ex}\\
\begin{minipage}[t]{0.49\textwidth}
\begin{eqnarray}
  a \SSeq c \parallel b \SSeq d & = & (a \parallel b) \SSeq (c \parallel d) \label{A-parallel-interchange-sequential}\\
  a \SSeq c \sqcup b \SSeq d & = & (a \sqcup b) \SSeq (c \sqcup d) \label{A-join-interchange-sequential}
\end{eqnarray}
\end{minipage}
\begin{minipage}[t]{0.49\textwidth}
\begin{eqnarray}
  a \SSeq c \parallel \Nil & = & \top \label{A-atomic-parallel-nil}\\
  a \SSeq c \sqcup \Nil & = & \top \label{A-atomic-join-nil}
\end{eqnarray}
\end{minipage}\\

The interchange axioms (\ref{A-parallel-interchange-sequential}) and (\ref{A-join-interchange-sequential}) 
become refinements from left to right if $a$ and $b$ are allowed to be arbitrary commands
(which corresponds to the weak interchange law in CKA \cite{DBLP:journals/jlp/HoareMSW11}).
The abstract algebra does not define the details of parallel composition of pairs of steps.
(See the relational interpretation of the algebra in Section~\ref{S-relational-atomic}
for one example of defining parallel composition of atomic steps.)
The command, $\Nil$, that terminates immediately without making any steps whatsoever 
cannot synchronise with a process that makes at least one step,
i.e.\ (\ref{A-atomic-parallel-nil}) and (\ref{A-atomic-join-nil}).

The \emph{negation} operator ($\anegate$) for atomic steps satisfies the following axioms
of a Boolean algebra.
Steps $a$ and $\anegate{a}$ have no common behaviour (\ref{anegate-disjoint})
and $\anegate{a}$ has all the step behaviours that $a$ does not have (\ref{anegate-complement}).
\vspace*{-2.5ex}
\\
\begin{minipage}{0.49\textwidth}
\begin{eqnarray}
  a \sqcup \anegate a & = & \top \label{anegate-disjoint}
\end{eqnarray}
\end{minipage}
\begin{minipage}{0.49\textwidth}
\begin{eqnarray}
  a \nondet \anegate a & = & \cstepd \label{anegate-complement}
\end{eqnarray}
\end{minipage}
\vspace*{1ex}
\\
Note that negation for tests ($\tnegate$) differs from negation for atomic steps ($\anegate$)
as we have $\tnegate \top = \Nil$ but $\anegate \top = \cstepd$.
The inclusion of a negation operator on steps
allows one to define an equivalent of an assertion for steps on the abstract level.
For any step $a$ define,
\begin{eqnarray}
\Assume{a} \sdef a \nondet (\anegate a) \SSeq \bot~. \label{def-assume}
\end{eqnarray}
\vspace*{-3.5ex}
\\
The command $\Assume{a}$ behaves as $a$ and terminates, or as $\anegate{a}$ and aborts.
It represents an assumption that step $a$ occurs in the sense that any other step 
allows any behaviour to occur after that step.
It provides the basis for rely conditions 
because they specify assumptions about the environment's behaviour (see Section~\ref{S-rely-guarantee}).

\subsection{Canonical representation of commands}

If the primitive commands of our language are tests, atomic steps and $\bot$, and
all other commands are built from these primitives using the operators of the language,
then initially, a command may either terminate immediately, abort or perform some atomic step.
That leads to the
canonical representation theorem,
in which $c$ can terminate if some test $t$ succeeds, abort if some test $t'$ succeeds,
or performs some step $a_i$ followed by some command $c_i$, for some $i \in I$.
\begin{theorem}[canonical-representation] \label{T-canonical-representation}
Any command $c$ can be expressed in the following form
\vspace*{-3ex}
\begin{eqnarray*}
c = t \nondet t'\bot \nondet \textstyle\Nondet_{i \in I} a_i\SSeq c_i
\end{eqnarray*}
where $t$ and $t'$ are tests, and for any $i$ in some (possibly empty) index set $I$, 
$a_i$ is an atomic step not equal to $\top$, and $c_i$ is a command.
\end{theorem}
The proof is conducted by structural induction over commands.
Note that if $c$ cannot terminate immediately, $t$ is $\top$.
If $c$ cannot abort, $t'$ is $\top$.
If $c$ cannot perform any step, $I = \{\}$.
A similar theorem can be found in \cite{Pris10} for SKA.

Because $\atomid$ is the identity of parallel for a single step,
$\Om{\atomid}$ acts as the identity of any sequence of steps
and hence $\Om{\atomid}$ is the identity of parallel, i.e.\ $\Skip = \Om{\atomid}$.
\begin{lemma}[atomic-identity-iteration]\label{L-atomic-identity-iteration}
\(
  ~~\Om{\atomid} \parallel c = c
\)
\end{lemma}
The proof makes use of Theorem~\ref{T-canonical-representation} to express $c$ in canonical form
(the proof is included in the appendix of \cite{TRatomicSteps}).

\subsection{Properties of iterations of atomic steps}
\label{S-atomic-iteration}

In addition to defining programming statements such as while loops, 
iterators are used to build specifications from atomic steps.  
For instance commands corresponding to Jones' rely and guarantee concepts are 
constructed as iterations of relatively straightforward commands 
that make assumptions about the steps of the environment and
constrain the steps of the program, respectively
(see Section~\ref{S-rely-guarantee}).  Below we provide
some useful properties of iterations of atomic steps.

Because $\Nil$ performs no steps, if it is run in parallel with a (possibly) finite iteration,
the composition cannot perform any steps but can terminate and hence equals $\Nil$.
If $\Nil$ is run in parallel with an infinite iteration, 
the combination cannot perform any steps but cannot terminate,
and hence equals the infeasible command $\top$.
\begin{lemma}[atomic-iteration-nil]\label{L-atomic-iteration-nil}
\begin{eqnarray*}
  \Fin{a} \parallel \Nil = \Nil
\hspace*{1cm}
  \Om{a} \parallel \Nil = \Nil
\hspace*{1cm}
  \Inf{a} \parallel \Nil = \top
\end{eqnarray*}
\end{lemma}

\begin{proof}
The properties follow from axiom (\ref{A-atomic-parallel-nil}) 
using unfolding of the iterations 
(i.e. $\Fin{a} = \Nil \nondet a \SSeq \Fin{a}$, 
$\Om{a} = \Nil \nondet a \SSeq \Om{a}$
and
$\Inf{a} = a \SSeq \Inf{a}$).
\end{proof}

For the following lemmas, 
let $a$ and $b$ be atomic steps, and $c$ and $d$ any commands.
Axiom (\ref{A-parallel-interchange-sequential}) can be extended to iteration $i$ times 
as given in the following lemma, which is proven by induction on $i$.
\begin{lemma}[atomic-iteration-power]\label{L-atomic-iteration-power}
\(
 a^i \SSeq c \parallel b^i \SSeq d  ~=~  (a\parallel b)^i \SSeq (c \parallel d)
\)
\end{lemma}
Choosing $c$ and $d$ to both be $\Nil$ gives the corollary that $a^i \parallel b^i = (a \parallel b)^i$.

For all further lemmas in this sub-section, we assume that sequential composition 
is conjunctive (\ref{L-seq-distr-left}) and 
hence that properties (\ref{L-isolation}) and (\ref{L-finite-iteration}) hold.
Two useful properties are the following.
\\
\begin{minipage}{0.4\textwidth}
\begin{eqnarray}
  \Fin{a} \parallel \Fin{b} = \Fin{(a \parallel b)} \label{atomic-fin-parallel-fin}
\end{eqnarray}
\end{minipage}
\begin{minipage}{0.59\textwidth}
\begin{eqnarray}
  \Inf{a} \parallel \Inf{b} & = & \Inf{(a \parallel b)} \label{A-atomic-iteration-infinite-infinite}
\end{eqnarray}
\end{minipage}\\

Property (\ref{atomic-fin-parallel-fin}) can be proven using the property that
non-deterministic choice over an arbitrary set distributes over parallel. A proof of 
(\ref{A-atomic-iteration-infinite-infinite}) would follow straightforwardly 
if the supremum over an arbitrary set (or even a chain) distributed over parallel,
however, that distribution property does not hold in general. 
We take property (\ref{A-atomic-iteration-infinite-infinite})  as an axiom 
because it does hold in our intended model.
Whether this axiom is independent of the other axioms in our algebra is an open question.

Property (\ref{atomic-fin-parallel-fin}) holds for atomic steps $a$ and $b$
but is only a refinement from left to right if $a$ and $b$ are replaced by arbitrary commands.
Property (\ref{atomic-fin-parallel-fin}) can be generalised to the following lemma
where we take into account that
the number of iterations of $a$ and $b$ might be the same,
or there are more iterations of $a$ than $b$
(and hence the additional iterations of $a$ are in parallel with the start of $d$),
or the symmetric case when there are more occurrences of $b$ than $a$.
\begin{lemma}[atomic-iteration-finite]\label{L-atomic-iteration-finite}
\begin{eqnarray*}
 \Fin{a} \SSeq c \parallel \Fin{b} \SSeq d 
  & = & \Fin{(a \parallel b)} \SSeq ((c \parallel d) \nondet (c \parallel b \SSeq \Fin{b} \SSeq d) \nondet (a \SSeq \Fin{a} \SSeq c \parallel d))
\end{eqnarray*}
\end{lemma}

Isabelle/HOL proofs of these lemmas have been completed.
They may be also found in the appendix of \cite{TRatomicSteps}.
Choosing $c$ and $d$ to both be $\Nil$ gives (\ref{atomic-fin-parallel-fin}) as a corollary.

An infinite iteration in parallel with an initial finite iteration matches the finite iteration
as well as what follows it.
\begin{lemma}[atomic-iteration-finite-infinite]\label{L-atomic-iteration-finite-infinite}
\( \Fin{a} \SSeq c \parallel \Inf{b} = \Fin{(a \parallel b)} \SSeq (c \parallel \Inf{b})
\)
\end{lemma}

Lemma~\ref{L-atomic-iteration-finite} can be extended to initial iterations 
that are either finite or infinite.
\begin{lemma}[atomic-iteration-either]\label{L-atomic-iteration-either}
\begin{eqnarray*}
 \Om{a} \SSeq c \parallel \Om{b} \SSeq d & = &
    \Om{(a \parallel b)} \SSeq ((c \parallel d) \nondet (c \parallel b \SSeq \Om{b} \SSeq d) \nondet (a \SSeq \Om{a} \SSeq c \parallel d))
\end{eqnarray*}
\end{lemma}
Choosing $c$ and $d$ to both be $\Nil$ gives the corollary that $\Om{a} \parallel \Om{b} = \Om{(a \parallel b)}$.

To see the relationship to an interleaving operator,
for any step $a$, define an 
\emph{action}
as $\atomic{a} = \Om{\atomid} \SSeq a \SSeq \Om{\atomid}$, then 
properties of $\atomic{a}$ can be proven using properties of the abstract algebra.
For example, one can derive the following lemma.
\begin{lemma}[atomic-interleaving]\label{L-atomic-interleaving}
\(
  \atomic{a} \parallel \atomic{b} ~=~ \atomic{a \parallel b} \nondet \atomic{a} \SSeq \atomic{b} \nondet \atomic{b} \SSeq \atomic{a}
\)
\end{lemma}
If $a$ and $b$ cannot synchronise (i.e. $a \parallel b = \top$) then
$\atomic{a} \parallel \atomic{b} ~=~ \atomic{a} \SSeq \atomic{b} \nondet \atomic{b}\SSeq \atomic{a}$
which echoes the following property of an interleaving operator:
\(
  a \interleave b  =  a \SSeq b \nondet b \SSeq a .
\)
Hence by including an identity, $\atomid$, for parallel with an atomic step, 
one can represent interleaving properties in the synchronising algebra albeit in a more complex form.
This approach was used by Milner in Synchronous CCS
\cite{Milner83} to allow 
the encoding of 
the better-known process algebra CCS. 
Our identity element takes on a similar role, although
we lift it to
a command 
as opposed to a transition event as in Milner's operational semantics.  
The advantage of the synchronising algebra is that one can represent both 
synchronising events and interleaving events in the one theory.
By using separate program and environment events,
such a theory supports the rely/guarantee approach of Jones for reasoning
about concurrent programs.

\section{Relational atomic steps}\label{S-relational-atomic}

This section examines an interpretation of the abstract atomic step algebra $\AtomicSteps$ in terms
of Aczel's program and environment state transitions.%
\footnote{A semantic model for this interpretation may be found in \cite{DaSMfaWSLwC-TR}.}
The resulting \emph{relational atomic steps} are used to define guarantees and relies
in Section~\ref{S-rely-guarantee}.
This interpretation assumes that sequential composition is conjunctive (\ref{L-seq-distr-left}).

Given a state space $\Sigma$ and a binary relation $r \in \pset(\Sigma \times \Sigma)$,
the command $\cpstepr$ can take an atomic \emph{program} step from state $\sigma$ to $\sigma'$ for any pair of states $(\sigma, \sigma')$ in $r$. 
Similarly, $\cestepr$ is a command that can perform any environment step
from state $\sigma$ to $\sigma'$ whenever $(\ssp) \in r$.
\\
\begin{minipage}{0.49\textwidth}
\begin{eqnarray*}
  \cpstepd & : & \pset(\Sigma \times \Sigma) \fun \AtomicSteps   \label{pstepr}
\end{eqnarray*}
\end{minipage}
\begin{minipage}{0.49\textwidth}
\begin{eqnarray*}
  \cestepd & : & \pset(\Sigma \times \Sigma) \fun \AtomicSteps   \label{estepr}
\end{eqnarray*}
\end{minipage}
\vspace{2ex}
\\
The commands $\cpstep{\emptyset}$ and $\cestep{\emptyset}$ are infeasible, i.e., $\cpstep{\emptyset} = \cestep{\emptyset} = \top$. 
The images of $\cpstepd$ and $\cestepd$ are disjoint except when the relation is empty,
i.e.\ for all $r_1$ and $r_2$,
\begin{eqnarray}
 \cpstep{r_1} \sqcup \cestep{r_2} = \top ~.
\end{eqnarray}
Together $\cpstepd$ and $\cestepd$ form a sub-lattice of commands with two further sub-lattices:
all the $\cpstep{r}$ commands form a sub-lattice and all the $\cestep{r}$ commands form a sub-lattice.

The functions $\cpstepd$ and $\cestepd$ are injective, i.e. different relations map to different commands,
and union of relations maps to a non-deterministic choice between the mappings of the relations
and intersection maps to the supremum in the command ordering.
\vspace*{-2ex}
\\
\begin{minipage}{0.49\textwidth}
\begin{eqnarray}
  r_1 = r_2 & \iff & \cpstep{r_1} = \cpstep{r_2}   \label{pstepr-injective} 
\end{eqnarray}
\end{minipage}
\begin{minipage}{0.49\textwidth}
\begin{eqnarray}
  \cpstep{r_1 \union r_2} & = & \cpstep{r_1} \nondet \cpstep{r_2} \\
  \cpstep{r_1 \int r_2} & = & \cpstep{r_1} \sqcup \cpstep{r_2} 
\end{eqnarray}
\end{minipage}
\vspace*{1ex}
\\
If $r_1 \subseteq r_2$, then
\(
  \cpstep{r_1} \nondet \cpstep{r_2} = \cpstep{r_1 \cup r_2}  =  \cpstep{r_2},
\)
and therefore $\cpstep{r_2} \refsto \cpstep{r_1}$.
Similar laws hold for $\cestepd$ steps.

In this interpretation one can instantiate the test command from Section~\ref{S-tests}
as $\gd{p}$ for $p \in \power \Sigma$,
which succeeds and terminates immediately if $p$ holds but is $\top$ otherwise,
e.g.\ $\gd{\emptyset} = \top$ and $\gd{\Sigma} = \Nil$.
As in the refinement calculus, 
a precondition command $\Pre{p}$ can then be defined as $\Assert{\gd{p}}$,
which equals $\gd{p} \nondet \gd{\lnot p} \SSeq \bot$,
and hence terminates immediately if $p$ holds but aborts otherwise,
e.g.\ $\Pre{\emptyset} = \bot$ and $\Pre{\Sigma} = \Nil$.

\section{Relies and guarantees}\label{S-rely-guarantee}

The rely/guarantee approach of Jones \cite{CoJo07} makes use of a rely condition,
$r$, a binary relation on states that expresses an assumption that every step made 
by the environment of the process satisfies $r$ between its before and after states.
Complementing that, all processes in its environment have a guarantee condition, 
$g$, a binary relation on states that expresses that every program step made by the process satisfies $g$.
For each process, its guarantee condition must imply the rely conditions of all the processes in
its environment.
This section encodes guarantees and relies using the abstract algebra of atomic steps.

\subsection{The guarantee command}\label{S-guarantee}

For a process to ensure a guarantee $g$, 
every atomic program ($\cpstepd$) step made by the program must satisfy $g$.
A guarantee puts no constraints on the environment of the process.
A guarantee command, $\Guar{g}$, is defined in terms of the iteration of
a single step guarantee, $\pguard{g}$, defined as follows.
\vspace{-1ex}
\\
\begin{minipage}{0.49\textwidth}
\begin{eqnarray*}
  \pguard{g} & \sdefs &\cpstep{g} \nondet \atomid 
\end{eqnarray*}
\end{minipage}
\begin{minipage}{0.49\textwidth}
\begin{eqnarray*}
  \Guar{g}    & \sdefs & \Om{\pguard{g}}
\end{eqnarray*}
\end{minipage}
\vspace*{1ex}
\\
A command $c$ with a guarantee of $g$ enforced on every program step
could possibly be expressed as $(\Guar{g}) \sqcup c$,
but that turns out to be too strong a requirement because it masks any aborting behaviour of $c$
because the guarantee never aborts, $(\Guar{g}) \sqcup \bot = (\Guar{g})$.
Instead, the weak conjunction operator  is used.

\emph{Weak conjunction} on commands, $\together$,
behaves like $\sqcup$ unless one of its operands aborts in which case we have
$c \together \bot = \bot$. The operator is associative, commutative and idempotent, and 
satisfies $c \together (\textstyle \Nondet D) = (\textstyle\Nondet_{d \in D} c \together d)$
for any non-empty set of commands $D$.
For any commands $c$ and $d$, steps $a$ and $b$, and tests $t$ and $t'$
weak conjunction satisfies the following axioms.
(Note the similarities between 
(\ref{A-together-interchange-sequential}) and (\ref{A-join-interchange-sequential}),
(\ref{A-atomic-together-nil}) and (\ref{A-atomic-join-nil})
and (\ref{A-atomic-together-inf}) and (\ref{A-atomic-iteration-infinite-infinite}).)
\\
\begin{minipage}{0.38\textwidth}
\begin{eqnarray}
  c \together \bot & = & \bot  \label{axiom-together-abort} \\
  a \together b & = & a \sqcup b \label{axiom-together-atomic} \\
  t \together t' & = & t \sqcup t' \label{axiom-together-test} 
\end{eqnarray}
\end{minipage}
\begin{minipage}{0.62\textwidth}
\begin{eqnarray}
  (a \SSeq c) \together (b \SSeq d) & = & (a \together b) \SSeq (c \together d) \label{A-together-interchange-sequential}\\
  (a \SSeq c) \together \Nil & = & \top \label{A-atomic-together-nil}\\
  \Inf{a} \together \Inf{b} &=& \Inf{(a \together b)}  \label{A-atomic-together-inf}
\end{eqnarray}
\end{minipage}\\
\\
Hence $a \together \cstepd = a \sqcup \cstepd = a$,
i.e.\ $\cstepd$ is the atomic step identity of weak conjunction.
More generally, $\Chaos \sdef \Om{\cstepd}$ is the identity of weak conjunction 
for any sequence of atomic steps.
The following lemma (and its proof) is similar to the corollary of Lemma~\ref{L-atomic-iteration-either}.
\begin{lemma}[atomic-iteration-conjunction]\label{L-atomic-iteration-conjunction}
\(
  \Om{a} \together \Om{b} = \Om{(a \together b)}
\)
\end{lemma}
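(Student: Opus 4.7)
The plan is to prove both refinements $\Om{(a \together b)} \refsto \Om{a} \together \Om{b}$ and $\Om{a} \together \Om{b} \refsto \Om{(a \together b)}$ and combine them.  Both directions exploit the characterisation of $\Om{c}$ as the least fixed point (in refinement order) of the operator taking $x$ to $\Nil \nondet c \SSeq x$, together with monotonicity of $\Om{\cdot}$ in its argument and the interchange/unit properties of weak conjunction.

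For the direction $\Om{(a \together b)} \refsto \Om{a} \together \Om{b}$, I would show that $\Om{a} \together \Om{b}$ is itself a fixed point of the defining functional, so that the least-fixed-point characterisation forces the refinement.  The calculation unfolds both $\Om{a}$ and $\Om{b}$ once via (\ref{L-omega-unfold}) and then uses distributivity of $\together$ over non-empty nondeterministic choice on each side to expand into four cross-terms.  Each simplifies: $\Nil \together \Nil = \Nil$ by (\ref{axiom-together-test}) together with idempotence of $\together$; both mixed terms $\Nil \together (b \SSeq \Om{b})$ and $(a \SSeq \Om{a}) \together \Nil$ collapse to $\top$ by (\ref{A-atomic-together-nil}) (using commutativity of $\together$); and the remaining term $(a \SSeq \Om{a}) \together (b \SSeq \Om{b})$ becomes $(a \together b) \SSeq (\Om{a} \together \Om{b})$ by the interchange law (\ref{A-together-interchange-sequential}).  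Absorbing the two $\top$ terms into the choice leaves exactly $\Nil \nondet (a \together b) \SSeq (\Om{a} \together \Om{b})$, matching (\ref{L-omega-unfold}) for $\Om{(a \together b)}$.

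For the reverse direction, I would argue by monotonicity.  Since $a$ and $b$ are atomic steps, axiom (\ref{axiom-together-atomic}) gives $a \together b = a \sqcup b$, so $a \refsto a \together b$ and $b \refsto a \together b$ in the refinement lattice.  Monotonicity of $\Om{\cdot}$ (a standard consequence of monotonicity of its defining functional) then yields $\Om{a} \refsto \Om{(a \together b)}$ and $\Om{b} \refsto \Om{(a \together b)}$, and monotonicity of $\together$ (derivable from its distributivity over non-empty nondeterministic choice) combined with its idempotence gives $\Om{a} \together \Om{b} \refsto \Om{(a \together b)} \together \Om{(a \together b)} = \Om{(a \together b)}$.

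The main obstacle is the first direction, specifically the recognition that the two mixed cross-terms vanish via (\ref{A-atomic-together-nil}).  It is this step that forces the lengths of the two iterations to agree, and it is precisely what distinguishes weak conjunction of iterations from the more routine analogue for $\nondet$, where the mixed cross-terms would merely recover the pure iterations rather than becoming infeasible.
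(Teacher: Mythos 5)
Your proof is correct, but it takes a genuinely different route from the paper's. The paper proves this lemma by mirroring the corollary of Lemma~\ref{L-atomic-iteration-either}: it isolates $\Om{a} = \Fin{a} \nondet \Inf{a}$ via (\ref{L-isolation}), expands $\Fin{a} = \Nondet_{i} a^i$ via (\ref{L-finite-iteration}), and argues that only the equal-length cross-terms survive (the unequal-length and finite-versus-infinite terms collapse to $\top$ via (\ref{A-atomic-together-nil})), finishing the infinite-infinite case with the axiom (\ref{A-atomic-together-inf}). That route depends on conjunctivity of sequential composition (\ref{L-seq-distr-left}), which is needed for (\ref{L-isolation}) and (\ref{L-finite-iteration}). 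Your two-sided argument avoids all of this: the fixed-point direction needs only one unfolding of (\ref{L-omega-unfold}), distributivity of $\together$ over non-empty choice, (\ref{A-atomic-together-nil}), (\ref{A-together-interchange-sequential}), and the fact that $\top$ is the unit of $\nondet$ (it is the top of the lattice and $\nondet$ is the meet); the converse direction needs only (\ref{axiom-together-atomic}), monotonicity of $\Om{\cdot}$ and $\together$, and idempotence of $\together$. You are also right about the orientation of the least-fixed-point step: since $\Om{(a \together b)}$ is the $\refsto$-least fixed point, it is refined by any fixed point of the same functional, which is exactly the direction $\Om{(a \together b)} \refsto \Om{a} \together \Om{b}$ that you need. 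What your approach buys is economy of hypotheses — in particular it dispenses with (\ref{L-seq-distr-left}) and with the infinite-iteration axiom (\ref{A-atomic-together-inf}), so it would remain valid in the non-conjunctive (CCS/CSP-style) setting of Section~\ref{S-process-algebras}; what the paper's approach buys is uniformity, since the same finite/infinite decomposition is reused across Lemmas~\ref{L-atomic-iteration-finite}--\ref{L-atomic-iteration-either} for the parallel operator, where the mixed cross-terms do not vanish and a genuine counting argument is unavoidable.
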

A command $c$ with a guarantee $g$ is represented by $(\Guar{g}) \together c$.
In the theory of Jones, a guarantee on a process may be strengthened.
That is reflected by the fact that if $g_1 \subseteq g_2$, then $\cpstep{g_2} \refsto \cpstep{g_1}$
and hence $\pguard{g_2} \refsto \pguard{g_1}$.
A process that must satisfy both guarantee $g_1$ and guarantee $g_2$,
must satisfy $g_1 \cap g_2$ because
\begin{eqnarray*}
  &&\pguard{g_1} \together \pguard{g_2} \\
  & = & (\cpstep{g_1} \nondet \atomid) \together (\cpstep{g_1} \nondet \atomid) \\
  & = & (\cpstep{g_1} \together \cpstep{g_2}) \nondet (\cpstep{g_1} \together \atomid) \nondet (\atomid \together \cpstep{g_2}) \nondet (\atomid \together \atomid) \\
  & = & \cpstep{g_1 \cap g_2} \nondet \atomid \\
  & = & \pguard{(g_1 \cap g_2)}
\end{eqnarray*}

The weak conjunction of a possibly infinite iteration of
atomic steps distributes over the sequential composition of commands
$c$ and $d$.

\begin{lemma}[atomic-infinite-distribution]\label{L-atomic-infinite-distribution}
\(
\Om{a} \together (c\SSeq d) ~=~ (\Om{a} \together c) \SSeq (\Om{a} \together d)
\)
\end{lemma}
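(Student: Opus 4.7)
The plan is to split $\Om{a} = \Fin{a} \nondet \Inf{a}$ via (\ref{L-isolation}) and handle the finite and infinite iterations separately. Distributing $\together$ over $\nondet$ on the left-hand side, and $\Seq$ over $\nondet$ (via (\ref{L-seq-distr-left}) and (\ref{L-seq-distr-right})) on the right-hand side, reduces both sides to a join of four terms indexed by whether $\Fin{a}$ or $\Inf{a}$ appears in each position. The goal then becomes showing that these two four-way decompositions coincide.

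For the finite part, I would prove by induction on $n \in \nat$ that
\[
a^n \together (c \Seq d) ~=~ \textstyle\Nondet_{i + j = n}\, (a^i \together c) \Seq (a^j \together d),
\]
iterating the interchange axiom (\ref{A-together-interchange-sequential}) and using $a \together a = a$ (which follows from axiom (\ref{axiom-together-atomic}) and idempotence of join) to peel off atomic prefixes. Summing over $n$ via (\ref{L-finite-iteration}) and the distributivity of $\together$ and $\Seq$ over $\nondet$ then gives $\Fin{a} \together (c \Seq d) = (\Fin{a} \together c) \Seq (\Fin{a} \together d)$.

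For the infinite case, the crucial auxiliary claim is that $\Inf{a} \together x$ is non-terminating in the sense that $(\Inf{a} \together x) \Seq y = \Inf{a} \together x$. This follows from unfolding $\Inf{a} = a \Seq \Inf{a}$ via (\ref{L-infinite-unfold}) and iterating the interchange axiom together with (\ref{L-infinite-annihilates}). Using this absorption, the three right-hand-side terms that contain a leading or trailing $\Inf{a}$ collapse onto a single copy of $\Inf{a} \together c$, and on the left-hand side $\Inf{a} \together (c \Seq d) = \Inf{a} \together c$ by the same argument. Reassembling the finite and infinite contributions via (\ref{L-isolation}) in reverse yields the desired equation.

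The main obstacle is the finite induction: the interchange axiom only lets us peel off one atomic step at a time and requires both arguments to begin with an atomic step, whereas $c \Seq d$ cannot generally be decomposed as an atomic prefix plus tail (since $c$ may begin with $\Nil$ or be a nondeterministic choice). The workaround is to argue purely at the level of $a^n \together \_$, where the $n$ atomic steps on the left can be partitioned across the $c/d$ boundary in every possible way; combinations where $a^i$ is asked to terminate against an $a$-starting tail of $c$ contribute $\top$ by (\ref{A-atomic-together-nil}) and are absorbed by $\nondet$, leaving precisely the feasible $(a^i \together c) \Seq (a^{n-i} \together d)$ terms. Once this combinatorial identity is established, the recombination into $\Om{a}$ is routine.
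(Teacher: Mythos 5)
There is a genuine gap at the heart of your finite case. The combinatorial identity
$a^n \together (c \SSeq d) = \Nondet_{i + j = n}\, (a^i \together c) \SSeq (a^j \together d)$
cannot be obtained by ``iterating the interchange axiom'': axiom (\ref{A-together-interchange-sequential}) applies only when \emph{both} operands carry an explicit atomic prefix, and $c \SSeq d$ carries none, so the induction on $n$ has no way to take its first step. You acknowledge this obstacle, but the proposed workaround --- that the $n$ atomic steps ``can be partitioned across the $c/d$ boundary in every possible way,'' with infeasible splits collapsing to $\top$ via (\ref{A-atomic-together-nil}) --- is exactly the statement you are trying to prove, not an argument for it. To make it precise you must be able to decompose an arbitrary $c$ into tests, aborting tests, and atomic-prefixed alternatives; that is Theorem~\ref{T-canonical-representation}, and once you invoke it and induct on the structure of $c$ you have essentially reconstructed the paper's proof, which works directly with $\Om{a} = \Nil \nondet a \SSeq \Om{a}$ against the canonical form of $c$ and never needs the $a^n$ identity at all. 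A similar remark applies to your infinite-case claim that $(\Inf{a} \together x) \SSeq y = \Inf{a} \together x$ for arbitrary $x$: it is plausible, but proving it again requires decomposing $x$, so it does not save any work.

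A secondary but substantive point: your decomposition $\Om{a} = \Fin{a} \nondet \Inf{a}$ rests on the isolation law (\ref{L-isolation}), which the paper derives only under conjunctivity of sequential composition (\ref{L-seq-distr-left}). The appendix explicitly flags that Lemma~\ref{L-atomic-infinite-distribution} is one of the two lemmas whose proof does \emph{not} assume (\ref{L-seq-distr-left}), precisely so that it remains valid in models (such as CCS- and CSP-style ones) where that axiom fails. Even if you repaired the finite case, your route would therefore prove a strictly weaker result than the paper's.
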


The proof 
uses the canonical representation of a command (Theorem~\ref{T-canonical-representation}) and
can be found in the appendix of \cite{TRatomicSteps}. 
As a consequence guarantees distribute over a sequence of commands.
\begin{eqnarray*}
  (\Guar{g}) \together (c \SSeq d) & = & ((\Guar{g}) \together c) \SSeq ((\Guar{g}) \together d)
\end{eqnarray*}

\subsection{The rely command}\label{S-rely}

A rely condition $r$ represents an assumption about environment steps.
If any environment step does not satisfy $r$, 
i.e.\ a step that refines $\cestep{\overline{r}}$,
the process may do anything,
which can be represented by it aborting.
Any other step is allowed.
The rely command is defined in terms of a single step assumption,
itself defined in terms of the abstract command $\Assume{}$ (\ref{def-assume}) as follows.
\begin{eqnarray*}
   \eassume{r} & \sdefs & \Assume{(\cnegate\cestep{\overline{r}})} 
               ~~~~~~~~ = \cnegate\cestep{\overline{r}} \nondet \cestep{\overline{r}} \SSeq \bot \\
   \Rely{r} & \sdefs & \Om{\eassume{r}}
\end{eqnarray*}
An environment assumption is placed on a command $c$ by placing the assumption on every step of $c$,
i.e.\ $(\Rely{r}) ~\together~ c$.
A command $c$ with rely $r$ and guarantee $g$ is expressed as
$(\Rely{r}) \together (\Guar{g}) \together c$,
for which every program step is required to satisfy $g$
unless an environment step does not satisfy $r$, in which case it aborts.
Here using weak conjunction ($\together$) rather than the lattice join ($\sqcup$) is essential
to prevent the guarantee masking the 
possible aborting behaviour of the rely. 
Because $\Assume{a} \together \Assume{b} = \Assume{(a \sqcup b)}$,
combining environment assumptions gives
\begin{eqnarray*}
  \eassume{r_1} \together \eassume{r_2}
  & = & \Assume{(\cnegate{\cestep{\overline{r_1}}} \sqcup \cnegate{\cestep{\overline{r_2}}})} 
  ~ = ~ \Assume{(\cnegate{\cestep{\overline{r_1 \cap r_2}}})} \\
  & = & \eassume{(r_1 \cap r_2)}~.
\end{eqnarray*}

From Lemma~\ref{L-atomic-infinite-distribution} and Theorem~\ref{T-canonical-representation}, a rely can be distributed over a sequential composition (the proof is included in the appendix of \cite{TRatomicSteps}).
\begin{eqnarray*}
  (\Rely{r}) \together (c \SSeq d) & = & (\Rely{r} \together c) \SSeq (\Rely{r} \together d)
\end{eqnarray*}

\subsection{Rely/Guarantee Logic}
\label{S-rg-logic}

\sloppy
Rely/guarantee reasoning is traditionally formulated in terms of a quintuple 
$\quintprgqc$, which extends Hoare logic with the rely $r$ and guarantee $g$ to handle
concurrency.  The quintuple states that every step of $c$ satifies $g$ and that it terminates and
establishes the postcondition $q$, provided it is executed from an initial state satisfying $p$ and
interference from the environment is bounded by $r$.  This quintuple is interpreted in our logic as
the following refinement.%
\footnote{
We use the syntax of Morgan's specification command $[q]$ \cite{TSS} whose definition 
is omitted for space reasons. It represents any sequence of
atomic steps that establishes $q$ between its initial and final states.  
See \cite{DaSMfaWSLwC-TR} for details.
}
\begin{equation*}
	\Pre{p} \SSeq ((\Rely{r}) \together (\Guar{g}) \together [q]) \refsto c
\end{equation*}
This demonstrates the application of the algebra to reasoning about shared
data.
As well as being able to express any law presented in terms of quintuples,
we are able to reason about the component commands separately,
e.g., strengthening a guarantee $g$ does not involve $p$, $r$ and $q$.

\section{Abstract communication in process algebras}\label{S-process-algebras}

In the process algebra domain, processes communicate via a set of synchronisation events, in contrast to 
processes in a shared memory concurrency model which
interleave operations on state.
We may build a core process algebra from the basic operators, with the addition of
a set of atomic program steps $\synchpstepa$
that model a process engaging in the corresponding abstract event $a \in \Event$, 
where $\Event$ includes at least the silent event $\silent$.
The basic properties of this language are those of the underlying algebra 
but we do not assume conjunctivity of sequential composition (\ref{L-seq-distr-left})
in order to be consistent with CCS.

Similarly to notation introduced in Section~\ref{S-atomic-iteration} we define
\begin{equation}
	\ateva \sdef 
		\Om{\synchestepE}
		\synchpstepa
		\Om{\synchestepE}
\end{equation}
This models process engaging in event $a$ (note that we drop the `$\pi$' tag from the $\ateva$ notation) 
preceded and succeeded by steps of the environment, similar to 
\emph{asynchronising} in Synchronous CCS \cite{Milner83} (discussed in \cite{CaC}).
This is the building block of event based languages:
we interpret both prefixing in CCS ($a.p$) and CSP ($a \fun p$) as $(\ateva \SSeq p)$.
We extend the core algebra to give two types of abstract interprocess communication: 
CCS-style binary synchronisation
(achieved by restricting the program)
and 
CSP-style multi-way synchronisation 
(achieved in-part by restricting the environment).

\subsection{Communication in CCS}
\label{S-ccs-encoding}

The main point of difference with the rely-guarantee algebra is that program steps representing
events can combine into a single program step
(communication).  
Interactions with $\sestepE$ remain the same as in the abstract algebra.
In CCS each non-silent event $a$ has a complementary event $\ecompla$.
A program step $\spstepa$ and its corresponding complementary program step $\spstepac$
may synchronise to become a silent step, $\spstepa \pl \spstepac = \ssilent$,
and hence using an instantiation of \reflemma{atomic-interleaving},
\begin{eqnarray}
  \ateva \pl \atevac ~ = ~ \atevs \nondet \ateva \SSeq \atevac \nondet \atevac \SSeq \ateva~.
  \label{eqn-ccs-atev-sync}
\end{eqnarray}
As such, events may synchronise \emph{or} interleave.
In CCS the restriction operator 
$\Res{A}{p}$,
where $A$ is a set of $\Event$s,
may be employed to exclude the final two interleaving options and hence
force processes to synchronise and generate a silent step.
It may be defined straightforwardly using join ($\sqcup$) to forbid events
in $A$, where we use the abbreviation
$
	\spstepA \sdef \Nondet_{a \in A} \synchpstepa
$ and note that $\cnegate \spstepA = \spstep{\overline{A}} \nondet \sestepE$.
\begin{equation}
\label{eqn-ccs-res}
	\Res{A}{p} \sdef p \sqcup \Om{(\cnegate \spstepA)}
\end{equation}
Hence, by \refeqn{ccs-atev-sync} and \refeqn{ccs-res},
$
  \Res{\{a, \ecompla\}}{(\ateva \pl \atevac)} 
  ~ = ~ 
  \atevs
$.

\subsection{Communication in CSP}
\label{S-csp-encoding}

To achieve CSP-style multi-way communication, a process $p$ prevents its environment from communicating via an
event in $p$'s alphabet until $p$ is ready.  We introduce a step $\sestepa$, where $\sestepE \refsto
\sestepa$ for all $a \in \Event$.
Its interactions through the parallel operator are defined (in a different way to CCS) below; all
other combinations of atomic steps result in $\top$.
\vspace*{-1ex}
\begin{equation*}
	\spstepa \pl \spstepa ~=~ \spstepa 
	\quad \mbox{for $a \neq \silent$}
	\quad
	\quad
	\spstepa \pl \sestepa ~=~ \spstepa 
	\quad
	\quad
	\sestepa \pl \sestepa ~=~ \sestepa 
\end{equation*}

Fundamental to CSP is the notion of a process's \emph{alphabet}, the set of events via which it may communicate and
in particular upon which the environment may not independently synchronise.
Here we explicitly associate an alphabet $A \subseteq Event$ with process $p$ by the syntax $\AProcAp$, defined by, 
\vspace*{-1ex}
\begin{equation}
\label{eqn-csp-alphabet}
	\AProcAp \sdef p \sqcup \Om{(\cnegate{\sestep{A}})}
\end{equation}
\vspace*{-3ex}
\\
where
analogously to program steps we define $\sestepA \sdef \Nondet_{a \in A} \sestepa$.
Note the similarity to CCS's restriction operator \refeqn{ccs-res}
but here it is the environment that is restricted, rather than the program.

In an early formulation by Hoare \cite{Hoare85} every process $p$ implicitly has an alphabet $A$
associated with it ($A$ is sometimes syntactically deduced from $p$).  
In formulations such as Roscoe's \cite{Roscoe98}
the alphabets are not associated with processes but are instead made explicit on the parallel operator.
We may define alphabetised parallel straightforwardly as
$
	p_1 \plA p_2
	\sdef
	(\AProcA{p_1}) \pl (\AProcA{p_2})
$.
Each side of the parallel composition prevents the other from taking a unilateral program step on events in $A$
by restricting its environment.  
Some of the basic communication properties from CSP follow from the above definitions and the atomic algebra, 
for instance, recalling that CSP's prefixing operator $a \fun p \sdef \ateva \SSeq p$,
for any $a \in A$,
$
	(a \fun p_1 \plA a \fun p_2) = a \fun (p_1 \plA p_2)
$.

The \emph{hiding} operator of CSP, $\Hide{A}{p}$, affects program steps, renaming events in $A$
to silent events.  
Hiding distributes over sequential and choice (but not parallel); its
relationship with atomic steps is 
$
	\Hide{A}{b} = \left \{ \begin{array}{ll}
						\ssilent & \mbox{if $b$ is of the form $\spstepa$ and $a \in A$}\\
						b & \mbox{otherwise}
						\end{array}
						\right.
$.

\subsection{Communication in SCCS}

Synchronous CCS (SCCS) \cite{Milner83,CaC} is a process algebra designed to be as minimal as possible in terms of operators.
It includes event prefix, disjunction (nondeterministic choice), composition (corresponding to our parallel), and restriction similar to that of CCS
\refeqn{ccs-res}.
SCCS events may be structured from  a finite set of ``particles'', 
forming a commutative group $(Event, \One, \mult, \Inverse{})$.  Every event is the product of particles: for instance, the
step $a$ is an event $(a^1 \mult b^0 \mult c^0 \mult \ldots)$.  The silent (or waiting) event $\One$ is event identity,
and fulfils a similar role to that of $\atomid$ in our algebra.  The complement of event $a$ is simply $\Inverse{a}$
and hence the product of an event and its complement, $a^1 \mult \Inverse{a}$, naturally equals $\One$.

The key aspect of SCCS is its simple definition of parallel composition in terms of product:
for atomic steps $a$ and $b$, $a \pl b = a \mult b$.  An 
event process
$\atev{a}$ is defined as $\Om{\One} \SSeq a \SSeq \Om{\One}$, which has the effect of \emph{asynchronising} the event, preserving \reflemma{atomic-interleaving}.
Milner shows that CCS can be encoded in SCCS through the
addition of asynchronising actions defined through the operational semantics; in an algebraic
setting the $\One$s are made explicit in the processes.  
Note that in this model there is no distinction between silent steps 
and environment steps: in SCCS both are $\One$, whereas in CCS the former is $\ssilent$.

\section{Related Work}\label{S-related-work}

Our Concurrent Refinement Algebra (CRA) (Section~\ref{S-general-algebra}) 
compares to Concurrent Kleene Algebra (CKA)
\cite{DBLP:journals/jlp/HoareMSW11} in that both extend a sequential algebra to allow
for reasoning about parallel composition. 
Synchronous Kleene Algebra (SKA) \cite{Pris10} is also based on Kleene Algebra 
but, unlike CKA, it adds tests and a synchronous parallel operator based on that of
Milner's SCCS \cite{Milner83}.
Both CKA and SKA are based on Kleene algebra and 
hence only support finite iteration and partial correctness. 
In comparison, our CRA
supports general fixed points 
and hence recursion and both finite and infinite iteration.
The richer structure of DRA contains a sub-lattice of commands below $\Chaos$
(see Fig.\ \ref{lattices}) that includes 
assertions (and hence preconditions in the relational interpretation)
and
assumptions (and hence rely commands),
and allows the weak conjunction operator, $\together$, to be distinguished from strong conjunction, $\sqcup$.
All these constructs are needed to faithfully represent rely/guarantee theory.

CKA is also applied to rely/guarantee rules \cite{DBLP:journals/jlp/HoareMSW11}
but they define a Jones-style 5-tuple  (as in Section~\ref{S-rg-logic})
in terms of two separate refinement conditions,
whereas in our approach the existing (single) refinement relation can be used directly.
In Jones' theory, a guarantee has to be satisfied only from initial states satisfying the precondition of the program,
and further, if its rely condition is broken by the environment, the program can abort.
However, in the CKA framework, the guarantee has to always be maintained by the program, irrespective
of what the initial state is and how the environment is behaving;
that over restricts the set of possible implementations.
Our theory faithfully reflects Jones' approach.

Our algebra of atomic steps makes use of a synchronous parallel operator similar to that 
in SCCS \cite{CaC} and 
in SKA \cite{Pris10}
but 
it differs in two ways: 
\begin{itemize}
\item
instead of atomic actions being separate from commands (as in SCCS and SKA),
they are treated as a sub-algebra within CRA
and
\item
while both 
SCCS and SKA
explicitly define composition of atomic steps (their $\times$ operator),
our parallel operator is used directly on atomic steps (because they are commands)
and its definition is left open.
\end{itemize}

\section{Conclusion}\label{S-conclusion}

This paper presents an abstract algebra of atomic steps for concurrent programs.
It is a Boolean algebra that is embedded as a sub-lattice into our Concurrent Refinement Algebra 
in a similar way as tests are embedded in Kleene algebras.
As for tests, a range of useful laws 
can be derived for atomic steps within this abstract algebra (e.g., on iteration and distributivity),
despite the fact that the interpretation of the parallel composition of two atomic steps is left open.

This construction simplifies many essential laws and
their proofs, as most supporting lemmas almost come for free on this abstract level. 
Accordingly, the mechanisation of the theory within the theorem prover Isabelle is lean 
and achieved a high degree of automation.
As the Concurrent Refinement Algebra was conceived to support reasoning with
relies and guarantees this simplification is of particular benefit in our laws
for rely and guarantee commands. 

A further gain of the generic shape of the abstract algebra lies in
its potential for reuse. We have demonstrated this by instantiating our abstract algebra
with two quite different styles of communication, a synchronous model (as in SKA \cite{Pris10} and SCCS) 
versus an interleaving model (as in CCS and CSP). 
For both styles the abstract algebra of atomic steps proves to be 
suitable.

The concept of sub-algebras in our Concurrent Refinement Algebra
is also applicable to assertions and assumptions. Assertions form a Boolean algebra 
with $\Nil$ as top element and $\bot$ as bottom element whereas 
step assumptions form a Boolean algebra 
with top element $\cstepd$ and bottom $\cstepd \SSeq \bot$.
Both inherit the laws on Boolean algebras similarly to 
tests and atomic steps.
Future work will investigate these structures and will extend our theories accordingly. 

\sloppy
The relationship between CCS and CSP has been explored in several papers
\cite{Brookes83,Glabbeek97} including augmenting the operational rules of
CSP so that the failures-divergences model (FDR) is respected in CCS
\cite{CSP-retractOf-CCS}.  
Future work is to apply a
more algebraic approach to the relationships between well known process
algebras (especially ACP \cite{BergstraKlop84}).

\paragraph{Acknowledgements.}
This work has benefited from input from Cliff Jones and Kim Solin.

\bibliographystyle{alpha}
\bibliography{ms}

\newcommand{\etalchar}[1]{$^{#1}$}
\begin{thebibliography}{HMSW11}

\bibitem[Acz83]{Aczel83}
P.~H.~G. Aczel.
\newblock On an inference rule for parallel composition, 1983.
\newblock Private communication to Cliff Jones
  \url{http://homepages.cs.ncl.ac.uk/cliff.jones/publications/MSs/PHGA-traces.pdf}.

\bibitem[BC85]{BerryCosserat85}
G.~Berry and L.~Cosserat.
\newblock The {Esterel} synchronous programming language and its mathematical
  semantics.
\newblock In S.~D. Brookes, A.~W. Roscoe, and G.~Winskel, editors, {\em Seminar
  on Concurrency}, volume 197 of {\em Lecture Notes in Computer Science}, pages
  389--448. Springer-Verlag, 1985.

\bibitem[BK84]{BergstraKlop84}
J.A. Bergstra and J.W. Klop.
\newblock Process algebra for synchronous communication.
\newblock {\em Information and Control}, 60(1-3):109--137, January-March 1984.

\bibitem[BK85]{BergstraKlop85}
J.A. Bergstra and J.W. Klop.
\newblock Algebra of communicating processes with abstraction.
\newblock {\em Theoretical Computer Science}, 37:77--121, 1985.

\bibitem[Bli78]{Blikle78}
A.~Blikle.
\newblock Specified programming.
\newblock In E.~K. Blum, M.~Paul, and S.~Takasu, editors, {\em Mathematical
  Studies of Information Processing}, volume~75 of {\em Lecture Notes in
  Computer Science}, pages 228--251. Springer, 1978.

\bibitem[Bro83]{Brookes83}
S.~D. Brookes.
\newblock On the relationship of {CCS and CSP}.
\newblock In J.~Diaz, editor, {\em Proc. of 10th Colloquium on Automata,
  Languages and Programming}, pages 83--96. Springer-Verlag, 1983.

\bibitem[CHM16]{DaSMfaWSLwC-TR}
Robert~J. Colvin, Ian~J. Hayes, and Larissa~A. Meinicke.
\newblock Designing a semantic model for a wide-spectrum language with
  concurrency.
\newblock Preliminary version at \url{http://arxiv.org/abs/1609.00195}, 2016.

\bibitem[CJ07]{CoJo07}
J.~W. Coleman and C.~B. Jones.
\newblock A structural proof of the soundness of rely/guarantee rules.
\newblock {\em Journal of Logic and Computation}, 17(4):807--841, 2007.

\bibitem[Con71]{Conway71}
J.H. Conway.
\newblock {\em Regular Algebra and Finite Machines}.
\newblock Chapman \& Hall, 1971.

\bibitem[dR01]{DeRoever01}
W.-P. de~Roever.
\newblock {\em Concurrency Verification: Introduction to Compositional and
  Noncompositional Methods}.
\newblock Cambridge University Press, 2001.

\bibitem[GM93]{gardinerfacj93}
P.H.B. Gardiner and C.~Morgan.
\newblock A single complete rule for data refinement.
\newblock {\em Formal Aspects of Computing}, 5:367--382, 1993.

\bibitem[Hay16]{AFfGRGRACP}
I.~J. Hayes.
\newblock Generalised rely-guarantee concurrency: An algebraic foundation.
\newblock {\em Formal Aspects of Computing}, 28(6):1057--1078, November 2016.

\bibitem[HCM{\etalchar{+}}16]{TRatomicSteps}
I.J. Hayes, R.J. Colvin, L.A. Meinicke, K.~Winter, and A.~Velykis.
\newblock An algebra of synchronous atomic steps.
\newblock Ext. report at \url{http://arxiv.org/pdf/1609.00118v1.pdf}, 2016.

\bibitem[HH10]{CSP-retractOf-CCS}
J.~He and C.A.R. Hoare.
\newblock {CSP} is a retract of {CCS}.
\newblock {\em Theoretical Computer Science}, 411(11-–13):1311--1337, 2010.

\bibitem[HJC14]{HayesJonesColvin14TR}
I.~J. Hayes, C.~B. Jones, and R.~J. Colvin.
\newblock Laws and semantics for rely-guarantee refinement.
\newblock Technical Report CS-TR-1425, Newcastle University, July 2014.

\bibitem[HMSW11]{DBLP:journals/jlp/HoareMSW11}
C.~A.~R. Hoare, B.~M{\"o}ller, G.~Struth, and I.~Wehrman.
\newblock {Concurrent Kleene Algebra} and its foundations.
\newblock {\em J. Log. Algebr. Program.}, 80(6):266--296, 2011.

\bibitem[Hoa85]{Hoare85}
C.~A.~R. Hoare.
\newblock {\em Communicating Sequential Processes}.
\newblock Prentice-Hall, 1985.

\bibitem[JHC15]{FACJexSEFM-14}
C.~B. Jones, I.~J. Hayes, and R.~J. Colvin.
\newblock Balancing expressiveness in formal approaches to concurrency.
\newblock {\em Formal Aspects of Computing}, 27(3):475--497, May 2015.

\bibitem[Jon81]{jones81d}
C.~B. Jones.
\newblock {\em Development Methods for Computer Programs including a Notion of
  Interference}.
\newblock PhD thesis, Oxford University, June 1981.
\newblock Available as: Oxford University Computing Laboratory (now Computer
  Science) Technical Monograph PRG-25.

\bibitem[Jon83]{jones83a}
C.~B. Jones.
\newblock Specification and design of (parallel) programs.
\newblock In {\em Proceedings of IFIP'83}, pages 321--332. North-Holland, 1983.

\bibitem[Koz97]{kozen97kleene}
D.~Kozen.
\newblock Kleene algebra with tests.
\newblock {\em ACM Trans.\ Prog.\ Lang.\ and Sys.}, 19(3):427--443, May 1997.

\bibitem[Mil83]{Milner83}
R.~Milner.
\newblock Calculi for synchrony and asynchrony.
\newblock {\em Theoretical Computer Science}, 25(3):267--310, 1983.

\bibitem[Mil89]{CaC}
A.J.R.G. Milner.
\newblock {\em Communication and Concurrency}.
\newblock Prentice-Hall, 1989.

\bibitem[Mor88]{TSS}
C.~C. Morgan.
\newblock The specification statement.
\newblock {\em ACM Trans.\ Prog.\ Lang.\ and Sys.}, 10(3):403--419, July 1988.

\bibitem[Pri10]{Pris10}
C.~Prisacariu.
\newblock {Synchronous Kleene Algebra}.
\newblock {\em Journal of Logic and Algebraic Programming}, 79(7):608--635,
  2010.

\bibitem[Ros98]{Roscoe98}
A.W. Roscoe.
\newblock {\em {The Theory and Practice of Concurrency}}.
\newblock Prentice Hall, 1998.

\bibitem[Sol07]{Solin07}
K.~Solin.
\newblock {\em Abstract Algebra of Program Refinement}.
\newblock PhD thesis, Turku Centre for Computer Science, 2007.

\bibitem[vG97]{Glabbeek97}
R.J. van Glabbeek.
\newblock Notes on the methodology of {CCS and CSP}.
\newblock {\em Theoretical Computer Science}, 177(2):329--349, 1997.

\bibitem[vW04]{Wright04}
J.~von Wright.
\newblock Towards a refinement algebra.
\newblock {\em Science of Computer Programming}, 51:23--45, 2004.

\end{thebibliography}

\newenvironment{lemmaproof}[1]{\par\noindent\textbf{Lemma \ref{L-#1} (#1)}\itshape}{}
\newenvironment{theoremproof}[1]{\par\noindent\textbf{Theorem \ref{T-#1} (#1)}\itshape}{}
\def\arraystretch{1.5}

\newenvironment{new-proof}{\paragraph{{\bf Proof.}}}{\hfill$\Box$}

\section{Proofs for inspection}

For all lemmas we assume $a$ and $b$ to be atomic steps and $c$ and $d$ arbitrary commands.
Furthermore, all lemmas except Lemmas~\ref{L-atomic-identity-iteration} and ~\ref{L-atomic-infinite-distribution} assume that 
the sequential operator is conjunctive (\ref{L-seq-distr-left}) as this property is used within
the proofs. \vspace*{2ex}

\begin{lemmaproof}{atomic-iteration-finite}
\begin{eqnarray*}
 \Fin{a} \SSeq c \parallel \Fin{b} \SSeq d 
  & = & \Fin{(a \parallel b)} \SSeq ((c \parallel d) \nondet (c \parallel b \SSeq \Fin{b} \SSeq d) \nondet (a \SSeq \Fin{a} \SSeq c \parallel d))
\end{eqnarray*}
\end{lemmaproof}

\begin{proof}
The proof relies on (\ref{L-finite-iteration}), i.e., $\Fin{a} = \Nondet_{i \in \nat} a^i$. 
The notation $\Nondet_{i,j \in \nat}^{i < j} c_{i,j}$ stands for the choice of $c_{i,j}$
 over all natural numbers $i$ and $j$, such that $i < j$.
\begin{displaymath}
 \begin{array}{cl}
\multicolumn{2}{l}{\Fin{a} \SSeq c \parallel \Fin{b} \SSeq d} \\[1ex]
    = & (\Nondet_{i \in \nat} a^i \SSeq c) \parallel (\Nondet_{j \in \nat} b^j \SSeq d)
      ~=~ \Nondet_{i \in \nat, j \in \nat} (a^i \SSeq c \parallel b^j \SSeq d)\\[1ex]
   = & \Nondet_{i \in \nat} (a^i \SSeq c \parallel b^i \SSeq d) 
         \nondet \Nondet_{i,j \in \nat}^{i < j} (a^i \SSeq c \parallel b^i \SSeq b^{j-i} \SSeq d)
         \nondet \Nondet_{i,j \in \nat}^{i > j} (a^j \SSeq a^{i-j} \SSeq c \parallel b^{j} \SSeq d)\\[1ex]
   = & \Nondet_{i \in \nat} (a \parallel b)^i \SSeq (c \parallel d) 
         \nondet \Nondet_{i, k \in \nat}^{k > 0}(a \parallel b)^i \SSeq (c \parallel b^k \SSeq d) 
         \nondet \Nondet_{j, k \in \nat}^{k > 0}(a \parallel b)^j \SSeq (a^k \SSeq c \parallel d)\\[1ex]
   = & \Fin{(a \parallel b)}\SSeq (c \parallel d) 
         \nondet (\Nondet_{i \in \nat} (a \parallel b)^i)  \SSeq \Nondet_{k \in \nat}^{k > 0}(c \parallel b^k \SSeq d)\\[1ex]
     &~    \nondet (\Nondet_{j \in \nat} (a \parallel b)^j) \SSeq \Nondet_{k \in \nat}^{k > 0}(a^k \SSeq c \parallel d) \\[1ex]
   = &  \Fin{(a \parallel b)} \SSeq ((c \parallel d) \nondet (c \parallel \Nondet_{k \in \nat}^{k > 0}b^k \SSeq d))
         \nondet (\Nondet_{k \in \nat}^{k > 0}a^k \SSeq c \parallel d)\\[1ex]
   = & \Fin{(a \parallel b)} \SSeq ((c \parallel d) \nondet (c \parallel b \SSeq \Fin{b} \SSeq d) \nondet (a \SSeq \Fin{a} \SSeq c \parallel d))
 \end{array}
\end{displaymath}
\end{proof}

\begin{lemmaproof}{atomic-iteration-finite-infinite}
\[
 ~~\Fin{a} \SSeq c \parallel \Inf{b} = \Fin{(a \parallel b)} \SSeq (c \parallel \Inf{b})
\]
\end{lemmaproof}

\begin{proof}
Note that, by unfolding law (\ref{L-infinite-unfold-power}), $\Inf{b} = b^i \Inf{b}$ for any $i \in \nat$.
The proof uses also Lemma~\ref{L-atomic-iteration-power}.
\begin{displaymath}
 \begin{array}{rclclcl}
\Fin{a} \SSeq c \parallel \Inf{b} 
  & = & (\Nondet_{i \in \nat} a^i \SSeq c) \parallel \Inf{b}\\
  & = & \Nondet_{i \in \nat}(a^i \SSeq c \parallel \Inf{b}) \\
  & = & \Nondet_{i \in \nat}(a^i \SSeq c \parallel b^i \SSeq \Inf{b}) \\
  & = &  \Nondet_{i \in \nat}(a \parallel b)^i \SSeq (c \parallel \Inf{b})\\
  & = & \Fin{(a \parallel b)} \SSeq (c \parallel \Inf{b})  
 \end{array}
\end{displaymath}
\end{proof}

\begin{lemmaproof}{atomic-iteration-either}
\begin{eqnarray*}
 \Om{a} \SSeq c \parallel \Om{b} \SSeq d & = &
    \Om{(a \parallel b)} \SSeq ((c \parallel d) \nondet (c \parallel b \SSeq \Om{b} \SSeq d) \nondet (a \SSeq \Om{a} \SSeq c \parallel d))
\end{eqnarray*}
\end{lemmaproof}

\begin{proof} Note that, by (\ref{L-isolation}) and (\ref{L-infinite-annihilates}), 
$\Om{a} = \Fin{a} \nondet \Inf{a}$ and $\Inf{a} c = \Inf{a}$, 
and by (\ref{L-iteration2}), $a \SSeq \Fin{a} \SSeq c \nondet \Inf{a} = a \SSeq \Om{a} \SSeq c$.
The proof uses also (\ref{L-iteration1}), 
and Lemmas \ref{L-atomic-iteration-finite} and \ref{L-atomic-iteration-finite-infinite}, 
and (\ref{A-atomic-iteration-infinite-infinite}), i.e.\ $\Inf{a} \parallel \Inf{b} = \Inf{(a \parallel b)}$.

\begin{displaymath}
 \begin{array}{rcl}
 \multicolumn{3}{l}{
  \Om{a} \SSeq c \parallel \Om{b} \SSeq d } \\
  & = & (\Fin{a} \nondet \Inf{a}) \SSeq c \parallel (\Fin{b} \nondet \Inf{b}) \SSeq d \\
  & = & (\Fin{a} \SSeq c \parallel \Fin{b} \SSeq d) \nondet (\Fin{a} \SSeq c \parallel \Inf{b}) \nondet 
           (\Inf{a} \parallel \Fin{b} \SSeq d) \nondet (\Inf{a} \parallel \Inf{b}) \\
  & = & \Fin{(a \parallel b)} \SSeq ((c \parallel d) \nondet (c \parallel b \SSeq \Fin{b} \SSeq d) \nondet (a \SSeq \Fin{a} \SSeq c \parallel d)) \nondet {} \\
  &    & \Fin{(a \parallel b)} \SSeq (c \parallel \Inf{b}) \nondet \Fin{(a \parallel b)} \SSeq (\Inf{a} \parallel d) \nondet \Inf{(a \parallel b)} \\
  & = & \Fin{(a \parallel b)} \SSeq ((c \parallel d) \nondet ((c \parallel b \SSeq \Fin{b} \SSeq d) \nondet (c \parallel \Inf{b})) \nondet {} \\
  &    & ((a \Fin{a} \SSeq c \parallel d) \nondet (\Inf{a} \parallel d))) \nondet \Inf{(a \parallel b)} \\
  & = & \Fin{(a \parallel b)} \SSeq ((c \parallel d) \nondet (c \parallel (b \SSeq \Fin{b} \SSeq d \nondet \Inf{b})) \nondet {}
            ((a \SSeq \Fin{a} \SSeq c  \nondet \Inf{a}) \parallel d))) \nondet \Inf{(a \parallel b)} \\
  & = & \Fin{(a \parallel b)} \SSeq ((c \parallel d) \nondet (c \parallel b \SSeq \Om{b} \SSeq d) \nondet
            (a \SSeq \Om{a} \SSeq c  \parallel d)) \nondet \Inf{(a \parallel b)} \\
  & = & \Om{(a \parallel b)} \SSeq ((c \parallel d) \nondet (c \parallel b \SSeq \Om{b} \SSeq d) \nondet (a \SSeq \Om{a} \SSeq c  \parallel d))
 \end{array}
\end{displaymath}
\end{proof}

\begin{lemmaproof}{atomic-identity-iteration}
\(
  ~~~\Om{\atomid} \parallel c = c
\)
\end{lemmaproof}
\vspace*{2ex}

\begin{proof}
Theorem~\ref{T-canonical-representation} states that $c$ can be represented as
$t \nondet t' \SSeq \bot \nondet \Nondet_{i \in I} a_i \SSeq c_i$.
The proof is via structural induction and hence we assume $\Om{\atomid} \parallel c_i = c_i$, for all $i \in I$.
\begin{eqnarray*}
  \Om{\atomid} \parallel c 
  & = & \Om{\atomid} \parallel (t \nondet t' \SSeq \bot \nondet \Nondet_{i \in I} a_i \SSeq c_i) \\
  & = & (\Om{\atomid} \parallel t) \nondet (\Om{\atomid} \parallel t' \SSeq \bot) \nondet (\Om{\atomid} \parallel \Nondet_{i \in I} a_i \SSeq c_i) \\
  & = & t \nondet t' \SSeq \bot \nondet \Nondet_{i \in I} (\Om{\atomid} \parallel a_i \SSeq c_i) \\
  & = & t \nondet t' \SSeq \bot \nondet \Nondet_{i \in I} (\atomid \SSeq \Om{\atomid} \parallel a_i \SSeq c_i) \\
  & = & t \nondet t' \SSeq \bot \nondet \Nondet_{i \in I} a_i \SSeq (\Om{\atomid} \parallel \SSeq c_i) \\
  & = & t \nondet t' \SSeq \bot \nondet \Nondet_{i \in I} a_i \SSeq c_i ~~~\mbox{by inductive hypothesis} \\
  & = & c
\end{eqnarray*}
Note that 
$\Om{\atomid} \parallel t = t \SSeq (\Om{\atomid} \parallel \Nil) = t \SSeq \Nil = t$
and
$\Om{\atomid} \parallel t' \SSeq \bot = t' \SSeq (\Om{\atomid} \parallel \bot) = t' \SSeq \bot$
because $\Om{\atomid} \parallel \bot \refsto \atomid \parallel \bot = \bot$ 
and hence $\Om{\atomid} \parallel \bot = \bot$.

\end{proof}

\begin{lemmaproof}{atomic-interleaving}
\[
  ~~\atomic{a} \parallel \atomic{b} ~=~ \atomic{a \parallel b} \nondet \atomic{a} \SSeq \atomic{b} \nondet \atomic{b} \SSeq \atomic{a}
\]
\end{lemmaproof}

\begin{proof}
The proof uses Lemmas \ref{L-atomic-iteration-either} and \ref{L-atomic-identity-iteration} 
and $\Om{\atomid} \SSeq \Om{\atomid} = \Om{\atomid}$.
\begin{displaymath}
 \begin{array}{rcl}
  \multicolumn{3}{l}{\atomic{a} \parallel \atomic{b}} \\
  & = & (\Om{\atomid} \SSeq a \SSeq \Om{\atomid}) \parallel (\Om{\atomid} \SSeq b \SSeq \Om{\atomid}) \\
  & = & (\Om{\atomid} \SSeq (a \SSeq \Om{\atomid} \parallel b \SSeq \Om{\atomid}) \nondet 
            (\Om{\atomid} \SSeq (a \SSeq \Om{\atomid} \parallel \atomid \SSeq \Om{\atomid} \SSeq b \SSeq \Om{\atomid})) \nondet 
            (\Om{\atomid} \SSeq (\atomid \SSeq \Om{\atomid} \SSeq a \SSeq \Om{\atomid} \parallel b \SSeq \Om{\atomid})) \\
  & = & (\Om{\atomid} \SSeq (a \parallel b) \SSeq \Om{\atomid}) \nondet 
            (\Om{\atomid} \SSeq a \SSeq (\Om{\atomid} \parallel \Om{\atomid} \SSeq b \SSeq \Om{\atomid})) \nondet 
            (\Om{\atomid} \SSeq b \SSeq (\Om{\atomid} \SSeq a \SSeq \Om{\atomid} \parallel \Om{\atomid})) \\
  & = & (\Om{\atomid} \SSeq (a \parallel b) \SSeq \Om{\atomid}) \nondet
            (\Om{\atomid} \SSeq a \SSeq \Om{\atomid} b \SSeq \Om{\atomid}) \nondet
            (\Om{\atomid} \SSeq b \SSeq \Om{\atomid}a \SSeq \Om{\atomid}) \\
  & = & \atomic{a \parallel b} \nondet \atomic{a} \SSeq \atomic{b} \nondet \atomic{b} \SSeq \atomic{a}
 \end{array}
\end{displaymath}
\end{proof}

\begin{lemmaproof}{atomic-infinite-distribution}
\[
  ~~\Om{a} \together (c\SSeq d) ~=~ (\Om{a} \together c) \SSeq (\Om{a} \together d)
\]
\end{lemmaproof} 

\begin{proof}
Using Theorem~\ref{T-canonical-representation} 
we may assume \(c = t \nondet t' \SSeq \bot \nondet \Nondet_{i \in I} b_i \SSeq c_i\).
The proof follows by induction on command $c$, 
i.e. assume the lemma holds for $c_i$. 
\begin{displaymath}
 \begin{array}{rcl}
   \multicolumn{3}{l}{\Om{a} \together (c\SSeq d)} \\
   & = & \Om{a} \together ( (t \nondet t'\bot \nondet \Nondet_{i \in I} b_i\SSeq c_i)\SSeq d)\\
   & = & \Om{a} \together (t \SSeq d \nondet t'\bot \nondet \Nondet_{i \in I} b_i\SSeq c_i \SSeq d)\\
   & = & (\Om{a} \together t \SSeq d) \nondet (\Om{a} \together t' \SSeq \bot) \nondet
          (\Om{a} \together  \Nondet_{i \in I} b_i\SSeq c_i \SSeq d)\\
   & = & t \SSeq (\Om{a} \together d) \nondet t' \SSeq \bot \nondet
          (\Om{a} \together  \Nondet_{i \in I} b_i\SSeq c_i \SSeq d)\\
   & = & t \SSeq (\Om{a} \together d) \nondet t' \SSeq \bot \SSeq (\Om{a} \together d) \nondet
          \Nondet_{i \in I} (a \together b_i) \SSeq (\Om{a} \together  c_i \SSeq d)\\
   & = & (t \nondet t'\SSeq \bot ) \SSeq (\Om{a} \together d) \nondet
          \Nondet_{i \in I} (a \together b_i) \SSeq (\Om{a} \together  c_i )\SSeq (\Om{a} \together  d) \\
   & = & (t \nondet t'\SSeq \bot ) \SSeq (\Om{a} \together d) \nondet
          \Nondet_{i \in I} (a \SSeq \Om{a} \together b_i \SSeq c_i) \SSeq (\Om{a} \together  d) \\
   & = & (t \nondet t'\SSeq \bot \nondet
          \Nondet_{i \in I} (\Om{a} \together b_i \SSeq c_i))\SSeq (\Om{a} \together  d) \\
   & = & ((\Om{a} \together t) \nondet (\Om{a} \together t'\SSeq \bot) \nondet
          (\Om{a} \together \Nondet_{i \in I} b_i \SSeq c_i))\SSeq (\Om{a} \together  d) \\
   & = & (\Om{a} \together (t \nondet t'\SSeq \bot \nondet \Nondet_{i \in I} b_i \SSeq c_i))
         \SSeq (\Om{a} \together  d) \\
   & = & (\Om{a} \together c)
         \SSeq (\Om{a} \together  d) \\
 \end{array}
\end{displaymath}
\end{proof}

\newpage

\begin{lemma}\label{atom-inf-conj-test}
For any test $t$, ~~
\( \Om{a} \together t = t \)
\end{lemma}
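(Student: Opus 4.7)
The plan is to unfold the iteration once and show that the resulting ``at least one step'' branch collapses to $\top$, leaving just $t$.

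First, using the unfolding law (\ref{L-omega-unfold}) for $\Om{a}$ and the distribution of $\together$ over $\nondet$, we rewrite
\[
\Om{a} \together t \;=\; (\Nil \nondet a \SSeq \Om{a}) \together t \;=\; (\Nil \together t) \;\nondet\; ((a \SSeq \Om{a}) \together t).
\]
The first disjunct is easy: $\Nil$ is itself a test (the true test), so axiom (\ref{axiom-together-test}) gives $\Nil \together t = \Nil \sqcup t = t$, since $\Nil$ is the unit of $\sqcup$ on tests.

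The heart of the proof is to show $(a \SSeq \Om{a}) \together t = \top$ for an arbitrary test $t$. Axiom (\ref{A-atomic-together-nil}) gives this directly only in the case $t = \Nil$, namely $(a \SSeq \Om{a}) \together \Nil = \top$. To extend it to a general test, I would use monotonicity: since $\together$ distributes over $\nondet$, it is monotone in each argument; and since $\Nil$ is the bottom of the test sublattice in the refinement order (any test strengthens the true test), we have $\Nil \refsto t$. Monotonicity then yields
\[
\top \;=\; (a \SSeq \Om{a}) \together \Nil \;\refsto\; (a \SSeq \Om{a}) \together t,
\]
and because $\top$ is the top of the refinement lattice, this forces $(a \SSeq \Om{a}) \together t = \top$.

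Putting the two disjuncts together and using that $\top$ is the unit of $\nondet$ completes the proof: $\Om{a} \together t = t \nondet \top = t$. The only nonroutine point is the extension of (\ref{A-atomic-together-nil}) from $\Nil$ to an arbitrary test, and the monotonicity-plus-lattice-position argument above is what makes that step go through.
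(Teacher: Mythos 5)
Your proof is correct and follows essentially the paper's own route: unfold $\Om{a}$ once, distribute $\together$ over $\nondet$, and discharge the two branches using (\ref{axiom-together-test}) and (\ref{A-atomic-together-nil}). The only difference is in the last micro-step: the paper factors the test out of the second branch to obtain $t \nondet t\SSeq\top$ (and then absorbs it), whereas you lift $(a \SSeq \Om{a}) \together \Nil = \top$ to an arbitrary test via monotonicity of $\together$ and $\Nil \refsto t$ --- both rest on the same two axioms and your monotonicity argument is sound, since $\together$ distributes over non-empty choices.
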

\begin{proof}
\begin{displaymath}
 \begin{array}{l}
 \Om{a} \together t
\Equals (\Nil \nondet a\Om{a}) \together t
\Equals (\Nil \together t) \nondet ( a\Om{a} \together t)
\Equals t \nondet t(a\Om{a} \together \Nil) ~=~ t \nondet t\top ~=~ t
\end{array}
\end{displaymath} 
\end{proof}

\begin{lemma}\label{atom-inf-conj-test-bot}
For any test $t$, ~~
\(  \Om{a} \together t\bot = t\bot \)
\end{lemma}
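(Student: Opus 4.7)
The plan is to apply \reflemma{atomic-infinite-distribution} (which is already proved above) with $c := t$ and $d := \bot$, reducing the left-hand side to a sequential composition of two simpler weak conjunctions. Concretely, I would first write
\[
  \Om{a} \together (t \SSeq \bot) ~=~ (\Om{a} \together t) \SSeq (\Om{a} \together \bot),
\]
which is an immediate instance of \reflemma{atomic-infinite-distribution}.

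Next, I would dispatch the two factors separately. The first factor collapses by the preceding \reflemma{atom-inf-conj-test}, giving $\Om{a} \together t = t$. The second factor collapses by axiom (\ref{axiom-together-abort}), namely $c \together \bot = \bot$, applied with $c := \Om{a}$, yielding $\Om{a} \together \bot = \bot$. Substituting both back produces $t \SSeq \bot = t\bot$, which is exactly the right-hand side.

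Because both supporting results are already in hand, I do not foresee any real obstacle; the only thing to be careful about is that \reflemma{atomic-infinite-distribution} genuinely applies here, which it does since $\bot$ is an arbitrary command and the lemma is stated for arbitrary $c$ and $d$ (no test or atomic restriction on its operands). No appeal to the canonical-form machinery is needed at this stage: all of the structural work is packaged inside \reflemma{atomic-infinite-distribution}, and the present lemma is essentially a two-line corollary combining that distributivity with the abort-absorption axiom and the preceding test lemma.
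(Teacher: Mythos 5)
Your two supporting facts are individually fine, but the route as a whole is circular within this development. The paper's proof of \reflemma{atomic-infinite-distribution} proceeds by writing $c$ in canonical form $t \nondet t'\SSeq\bot \nondet \Nondet_{i\in I} b_i \SSeq c_i$, and when it disposes of the middle summand it uses precisely the identity $\Om{a} \together t'\SSeq\bot = t'\SSeq\bot$ --- the statement you are trying to prove (just as it uses $\Om{a} \together t = t$, i.e.\ Lemma~\ref{atom-inf-conj-test}, for the first summand). These two small lemmas are exactly the elementary base facts that the larger distribution proofs lean on, which is why they are established directly rather than as corollaries. Deriving Lemma~\ref{atom-inf-conj-test-bot} from \reflemma{atomic-infinite-distribution} therefore inverts the dependency, and your remark that ``no appeal to the canonical-form machinery is needed'' is misleading: that machinery, together with this very identity, is what underwrites the lemma you invoke.

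The fix is short and is what the paper does: unfold $\Om{a} = \Nil \nondet a\SSeq\Om{a}$ by (\ref{L-omega-unfold}), distribute $\together$ over the choice, pull the test $t$ out of each weak conjunct, and finish with (\ref{axiom-together-abort}):
\[
\Om{a} \together t\SSeq\bot
 ~=~ (\Nil \together t\SSeq\bot) \nondet (a\SSeq\Om{a} \together t\SSeq\bot)
 ~=~ t\SSeq(\Nil \together \bot) \nondet t\SSeq(a\SSeq\Om{a} \together \bot)
 ~=~ t\SSeq\bot \nondet t\SSeq\bot ~=~ t\SSeq\bot .
\]
Your final two ingredients (Lemma~\ref{atom-inf-conj-test} and axiom (\ref{axiom-together-abort})) are the right ones; they just need to be applied after a single unfolding of the iteration rather than downstream of the heavyweight distribution lemma.
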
 
\begin{proof}
\begin{displaymath}
 \begin{array}{l}
 \Om{a} \together t\bot
\Equals (\Nil \nondet a\Om{a}) \together t\bot
\Equals (\Nil \together t\bot) \nondet ( a\Om{a} \together t\bot)
\Equals t(\Nil \together \bot) \nondet t(a\Om{a} \together \bot) ~=~ t\bot \nondet t\bot ~=~ t\bot
\end{array}
\end{displaymath} 
\end{proof}

\begin{lemma}[rely-distribution]
\[
  ~~(\Rely{r}) \together (c\SSeq d) ~=~ ((\Rely{r}) \together c) \SSeq ((\Rely{r}) \together d)
\]
\end{lemma}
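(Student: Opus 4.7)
The plan is to mirror the proof of Lemma~\ref{L-atomic-infinite-distribution}, since $\Rely{r} = \Om{\eassume{r}}$ has the same $\Om{\cdot}$ shape but with the compound step $\eassume{r} = \cnegate\cestep{\overline{r}} \nondet \cestep{\overline{r}}\SSeq\bot$ in place of a single atomic step. That earlier lemma does not apply directly because $\eassume{r}$ is not itself atomic, but its inductive structure carries over with only a minor adjustment to accommodate the extra aborting branch.

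First I would invoke Theorem~\ref{T-canonical-representation} and write $c = t \nondet t'\SSeq\bot \nondet \Nondet_{i \in I} b_i \SSeq c_i$, then proceed by well-founded induction on $c$. Distributing $\Rely{r} \together (\cdot)$ and $(\cdot)\SSeq d$ over the outer choice splits the verification into three kinds of summand: a test $t$, an aborting test $t'\SSeq\bot$, and each atomic-prefix $b_i\SSeq c_i$. For the two easy kinds I would adapt Lemmas~\ref{atom-inf-conj-test} and~\ref{atom-inf-conj-test-bot} to the two-branch form of $\eassume{r}$ by unfolding $\Rely{r} = \Nil \nondet \cnegate\cestep{\overline{r}}\SSeq\Rely{r} \nondet \cestep{\overline{r}}\SSeq\bot$; monotonicity of $\together$ combined with axiom~(\ref{A-atomic-together-nil}) shows that the two atomic-prefix branches annihilate to $\top$ against a test, leaving only the $\Nil$ branch and yielding $\Rely{r} \together t\SSeq d = t\SSeq(\Rely{r} \together d)$ and $\Rely{r} \together t'\SSeq\bot = t'\SSeq\bot$.

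The main case, and the main obstacle, is the atomic-prefix summand $b_i\SSeq c_i\SSeq d$, where I must peel off a single step of $\Rely{r}$ on both sides of the equation and force the two unfoldings to match. Applying the unfolding of $\Rely{r}$ above and then the interchange axiom~(\ref{A-together-interchange-sequential}) to each of its three branches paired with $b_i\SSeq c_i\SSeq d$, the $\Nil$ branch contributes $\top$ (absorbed by $\nondet$); the non-aborting branch yields $(\cnegate\cestep{\overline{r}} \together b_i)\SSeq(\Rely{r} \together c_i\SSeq d)$, where the inductive hypothesis rewrites the trailing factor to $(\Rely{r} \together c_i)\SSeq(\Rely{r} \together d)$; and the aborting branch yields $(\cestep{\overline{r}} \together b_i)\SSeq(\bot \together c_i\SSeq d) = (\cestep{\overline{r}} \together b_i)\SSeq\bot$ by axiom~(\ref{axiom-together-abort}). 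Performing the same unfolding on the right-hand side $(\Rely{r} \together b_i\SSeq c_i)\SSeq(\Rely{r} \together d)$ produces exactly the same two surviving summands, because $\bot\SSeq(\Rely{r} \together d) = \bot$ absorbs the trailing factor in the aborting branch. The two sides therefore coincide, completing the induction.
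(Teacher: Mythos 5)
Your proof is correct and follows essentially the same route as the paper's: canonical representation of $c$ (Theorem~\ref{T-canonical-representation}), structural induction on that form, a one-step unfolding of $\Rely{r}$ into its $\Nil$, non-aborting and aborting branches (the paper packages this as $\Om{a}\SSeq(\Nil\nondet\anegate{a}\SSeq\bot)$ and unfolds $\Om{a}$, which amounts to the same three summands), and then the interchange axiom~(\ref{A-together-interchange-sequential}) together with (\ref{axiom-together-abort}), (\ref{A-atomic-together-nil}), the inductive hypothesis and left-annihilation by $\bot$ to match the two surviving summands on each side. The only wobble is your justification of the test cases: the atomic-prefix branches annihilate to $\top$ only against a \emph{bare} test (giving $\Rely{r}\together t = t$ and, via (\ref{axiom-together-abort}), $\Rely{r}\together t'\SSeq\bot = t'\SSeq\bot$, i.e.\ the paper's Lemmas~\ref{atom-inf-conj-test} and~\ref{atom-inf-conj-test-bot}), whereas $\Rely{r}\together(t\SSeq d) = t\SSeq(\Rely{r}\together d)$ requires factoring the test out first --- though the paper's own proof asserts that step just as tersely.
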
 

\begin{proof}
We base the proof on the more general concept of \emph{assumptions}. Using the 
definition of environment assumptions, $\eassume{r} = \Assume{(\cnegate\cestep{\overline{r}})} = \cnegate\cestep{\overline{r}} \nondet \cestep{\overline{r}}\SSeq \bot$ 
and relies, $\Rely{r} \sdef \Om{\eassume{r}}$, we can deduce  
\[\exists a \dot \Rely{r} = \Om{(\Assume{a})} = \Om{a} \nondet \Om{a}\SSeq \anegate{a} \SSeq \bot\]
Furthermore, using Theorem~\ref{T-canonical-representation} 
we may assume \(c = t \nondet t' \SSeq \bot \nondet \Nondet_{i \in I} b_i \SSeq c_i\).
\begin{displaymath}
 \begin{array}{l}
   \Om{(\Assume{a})} \together (c\SSeq d)
\Equals
  (\Om{a} \nondet \Om{a} \SSeq \anegate{a}\SSeq \bot) \together (c \SSeq d)
\Equals
  \Om{a}(\Nil \nondet \anegate{a}\bot) \together (c \SSeq d)\\
\Equals*[~~assume $x \sdef \Nil \nondet \anegate{a}\bot$ and the canonical form of $c$]
 \Om{a}x \together (t \nondet t' \SSeq \bot \nondet \Nondet_{i \in I} b_i \SSeq c_i)d
\Equals*[~~ distribute $\nondet$ over $\together$ and  left-distributivity of sequential]
 \Om{a}x \together t d ~\nondet~
 \Om{a}x \together t' \bot  ~\nondet~
 (\Nondet_{i \in I}  \Om{a}x \together b_i \SSeq c_i d)
\end{array}
\end{displaymath}
\begin{displaymath}
 \begin{array}{l}
\Equals*[by unfolding of $\Om{a}$ using (\ref{L-omega-unfold})]
 t(\Om{a}x \together d) ~\nondet~
 t' \bot  ~\nondet~ (\Nondet_{i \in I}  
  \fbox{$(a\Om{a}x \nondet x) \together b_i \SSeq c_i d)$}\vspace*{1ex}\\
\hspace*{2cm}
\begin{array}[t]{l}
  (a\Om{a}x \nondet x) \together b_i \SSeq c_i d)
 \Equals*[by definition of $x$]
    (a \together b_i)(\Om{a}x \together c_i d) \nondet (\anegate{a}\together b_i)(\bot \together c_i d)
  \Equals*[by induction assumption: $\Om{a}x \together c_i d = (\Om{a}x \together c_i)(\Om{a}x \together d)$]
   (a \together b_i)(\Om{a}x \together c_i)(\Om{a}x \together d) \nondet
      (\anegate{a}\together b_i)\bot
  \Equals*[with (\ref{axiom-together-abort}) and $\bot$ left annihilator ]
    (a \together b_i)(\Om{a}x \together c_i)(\Om{a}x \together d) \nondet
     (\anegate{a}\bot\together b_i c_i) (\Om{a}x \together d)
  \Equals*[with (\ref{A-atomic-together-nil})] 
   (a\Om{a}x \together b_i c_i)(\Om{a}x \together d) \nondet
    ((\anegate{a}\bot \nondet \Nil)\together b_i c_i) (\Om{a}x \together d)
  \Equals*[with (\ref{L-seq-distr-right})] 
   ((a\Om{a}x \together b_i c_i) \nondet (\anegate{a}\bot \nondet \Nil)\together b_i c_i))
    (\Om{a}x \together d)
  \Equals*[distribute $\together$ over $\nondet$, with definition of $x$ and (\ref{L-omega-unfold})]
    (\Om{a}x \together b_i c_i) (\Om{a}x \together d)
  \end{array}\vspace*{1ex}
\Equals
  t(\Om{a}x \together d) ~\nondet~
  t' \bot (\Om{a}x \together d) ~\nondet~
  (\Nondet_{i \in I}  \fbox{$(\Om{a}x \together b_i c_i) (\Om{a}x \together d)$})
\Equals*[distribute of sequential over $\nondet$]
  (t \nondet t'\bot \nondet (\Om{a}x \together \Nondet_{i \in I} b_i c_i)) (\Om{a}x \together d)
\Equals*[by Lemmas \ref{atom-inf-conj-test} and \ref{atom-inf-conj-test-bot}]
   ( (\Om{a}x \together t) \nondet  (\Om{a}x \together t'\bot) \nondet 
     (\Om{a}x \together \Nondet_{i \in I} b_i c_i)) (\Om{a}x \together d)
\Equals*[with Theorem~\ref{T-canonical-representation}]
 (\Om{a}x \together c) (\Om{a}x \together d)
\end{array}
\end{displaymath}
\end{proof}

\end{document}